\documentclass[11pt, titlepage]{article}
\usepackage[left=2.2cm,right=2.2cm,top=2.5cm,bottom=2.5cm]{geometry} 
\usepackage{amsfonts,amssymb,amsthm,amsmath,amscd}
\usepackage{enumerate}
\usepackage{tikz}

\usepackage{graphicx}

\usepackage{hyperref}
\usepackage[capitalise,noabbrev]{cleveref} 
\crefname{assumption}{assumption}{assumptions}
\crefname{auxassumption}{assumption}{assumptions}
\crefname{appassumption}{condition}{conditions}

\usepackage{overpic} 
\usepackage{url} 
\usepackage{comment}

\usepackage{graphicx} 
\usepackage{amsmath,amsthm,amssymb,color,latexsym}
\usepackage{mathtools}
\usepackage{mathrsfs}
\usepackage{bbm}
\usepackage[shortlabels]{enumitem}
\usepackage{mleftright}
\usepackage{microtype}
\usepackage[inkscapelatex=false]{svg}
\usepackage[many]{tcolorbox}
\usepackage{setspace}  
\usepackage{multicol}
\usepackage{tabularx}
\usepackage{scalerel}
\usepackage[authoryear, square]{natbib} 
\usepackage{xparse}
\usepackage{algorithm}
\usepackage{algpseudocode}
\usepackage{aliascnt}

\theoremstyle{theorem}

\newtheoremstyle{break}
  {3pt}{3pt}    
  {}          
  {}            
  {\bfseries}  
  {.}          
  {\newline}     
  {}

\theoremstyle{break}
\newtheorem{problem}{Problem}

\theoremstyle{definition}
\newtheorem{theorem}{Theorem}[section]

\newaliascnt{lemma}{theorem}
\newtheorem{lemma}[lemma]{Lemma}
\aliascntresetthe{lemma}

\newaliascnt{corollary}{theorem}
\newtheorem{corollary}[corollary]{Corollary}
\aliascntresetthe{corollary}

\newaliascnt{proposition}{theorem}
\newtheorem{proposition}[proposition]{Proposition}
\aliascntresetthe{proposition}

\newaliascnt{definition}{theorem}
\newtheorem{definition}[definition]{Definition}
\aliascntresetthe{definition}

\newaliascnt{remark}{theorem}

\aliascntresetthe{remark}

\newaliascnt{example}{theorem}

\aliascntresetthe{example}

\newtheorem{assumption}{Assumption}

\newtheorem{auxassumption}{Assumption}

\newtheorem{appassumption}{Assumption}

\crefname{appsec}{Appendix}{Appendices}
\Crefname{appsec}{Appendix}{Appendices}
\crefname{appsubsec}{Appendix}{Appendices}
\Crefname{appsubsec}{Appendix}{Appendices}

\newcommand\restr[2]{{  \left.\kern-\nulldelimiterspace  #1  \right|_{#2} }}

\newcommand\diff{\mathop{}\!\mathrm{d}}

\newcommand\dP{\diff \kern 0.033em \mathbb{P}}

\newcommand\cev[2]{\mathbb{E}\mleft[ #1 \mleft\vert #2 \mright.\kern-\nulldelimiterspace \mright]}

\newcommand\cprob[2]{\mathbb{P}\mleft( #1 \mleft\vert #2 \mright.\kern-\nulldelimiterspace \mright)}
\newcommand{\indep}{\perp \!\hspace{-0.3em}\! \perp}

\NewDocumentCommand{\dodistr}{m m m}{
  p\!\left(
  #1
  \if\relax\detokenize{#2}\relax
    \,\middle\|\,
  \else
    \mid #2 \,\middle\|\,
  \fi
  #3
  \right)
}

\newcommand{\dsep}[1][\mathcal{G}]{\perp_{#1}}

\newcommand{\pa}[1][\mathcal{G}]{\textnormal{pa}_{#1}}
\newcommand{\an}[1][\mathcal{G}]{\textnormal{an}_{#1}}
\newcommand{\de}[1][\mathcal{G}]{\textnormal{de}_{#1}}
\newcommand{\dis}[1][\mathcal{G}]{\textnormal{dis}_{#1}}

\newcommand{\mb}[1][\mathcal{G}]{\textnormal{mb}_{#1}}

\newcommand{\setRV}[1]{\mathbf{#1}}

\tcbset{
    sharp corners,
    colback = white,
    before skip = 8pt,    
    after skip = 8pt      
}                           

\newtcolorbox{boxA}{
    boxrule = 1.5pt,
    colframe = black 
}

\newtcolorbox{kernelbox}[1]{
    colback=white,
    colframe=black,
    title={#1},
    fonttitle=\bfseries,
    boxrule=0.8pt,
    arc=2pt,
    left=6pt,
    right=6pt,
    top=6pt,
    bottom=6pt
}

\renewenvironment{abstract}
  {\vspace{1.2em}
   \noindent\textbf{Abstract.}\ }
  {\par\vspace{0.9em}}

\begin{document}

\thispagestyle{empty}

\begin{center}
{\LARGE\bfseries Identifying Interventional Joint Distributions\\
via Extended Bridge Functions}

\vspace{0.8em}

{\large Constantin Schott}

\vspace{0.25em}

{\small University of Cambridge}

\vspace{0.15em}

{\small \texttt{cts37@cam.ac.uk}}

\vspace{0.5em}

{\small \today}
\end{center}

\begin{abstract}
Existing identification results in proximal causal inference often focus on marginal
interventional distributions using standard outcome- or treatment bridge functions. These methods do not generally identify joint interventional distributions that contain all proxy variables that were used to define the corresponding bridge functions. In many applications, however, these joint interventional distributions are a natural target of interest. We introduce extended bridge functions and derive new identification results for joint interventional distributions that may retain all relevant proxy variables. We then apply these results to proximal identification algorithms, where interventional kernels naturally arise as intermediate objects, yielding a generalized framework based on kernel operations.
\end{abstract}

\noindent\textbf{Keywords.}
causal inference; graphical models; bridge functions; proximal causal inference; identification

\vspace{1.2em}

\section{Introduction}
\label{sec:Introduction}
Causal inference from observational data is becoming increasingly important in many disciplines, as it seeks to bridge the gap between association and causation. Association addresses questions of the form ``What do I expect to happen if I observe \ldots?'' or ``How likely is event $A$, given that event $B$ has been observed?''. Formally, associational knowledge is encoded in the joint probability distribution of a collection of observed random variables, which we refer to as the \textit{observational distribution}.  As a running example, we consider a medical study in which a treatment variable $A$ (influenza vaccination) may influence an outcome variable $Y$ (infectious respiratory-disease-related hospitalization), along with measured covariates $X$ (age, gender). In this context, the observational distribution can be estimated using an observational study in which vaccination status $A$, covariates $X$, and hospitalization outcomes $Y$ are recorded for a cohort of patients. Crucially, the investigator does not intervene in the population dynamics but merely measures variables as they naturally occur. In applied settings, such observational data are often readily available and comparatively inexpensive to obtain \citep{hernan2025}.

In many settings, researchers need to evaluate the effect of actively performing an intervention, trying to answer questions of the form ``What happens if I do \ldots?''. Formally, this is represented by a distinct post-intervention joint distribution, $(Y(a), X(a))$, over the observed variables, induced by the (hypothetical) intervention that sets $A = a$. We refer to this as an \textit{interventional distribution}. In the medical trial, such interventional distributions can be empirically approximated in randomized controlled trials by randomly assigning each participant to a treatment $A = a$ and subsequently measuring the resulting outcome $Y$ and covariates $X$. In contrast to observational studies, interventional data are hard to obtain in practice due to experimental, logistical, technical, or ethical constraints \citep{hernan2025}. 

In many practical situations, researchers need to evaluate causal effects but are unable to conduct a randomized trial. Instead, they are confined to working with purely observational data. In our example medical trial, randomizing patients to receive a vaccine is not feasible. On the other hand, it is easily possible to ask patients about their vaccination status and hospitalization events. Consequently, our task is to use the readily accessible observational distribution to identify the hard-to-obtain interventional distribution. The main challenge lies in bias due to hidden confounders. These are unobserved variables that influence both treatment and outcome, thereby inducing bias that cannot be removed by standard statistical adjustment methods.

We begin by providing a tutorial on causal identification theory and present recently developed \emph{proximal causal inference} methods that show how, under certain conditions, identification of the interventional distribution remains possible in the presence of hidden confounding. Proximal causal inference uses proxy variables of the hidden confounders. Such proxy variables are often abundantly available in many observational studies \citep{TchetgenTchetgen2024}. We present two different identification strategies based on proxy variables: identification via \emph{outcome bridge functions} \citep{Miao2018} and identification via \emph{treatment bridge functions} \citep{Cui2023}. Outcome bridge functions have been extensively applied by \citet{Shpitser2021} to the general identification problem. We derive analogous results for treatment bridge functions in \cref{sec:outcome_and_treatment_bridge_functions}. Our primary contribution is the introduction of new \emph{extended bridge functions} that allow for the identification of the entire interventional joint distribution in \cref{sec:extended_bridge_functions}. Finally, in \cref{sec:proximal_identification_theory}, we combine outcome, treatment, and extended bridge functions with classical identification theory and propose a framework for generalized proximal identification algorithms as an extension of the results in \citet{Shpitser2021}.

\section{Notation and conventions}
\label{sec:notation}
For a random variable $Y$, let $\mathcal{X}_Y$ denote its state space, $P_Y
$ denote its distribution, and $p(y)$ denote the probability function for discrete $Y$ or the probability density function for continuous $Y$. We always silently assume that all random variables have a joint density with respect to an underlying dominating product measure and that all conditional probabilities and densities are well-defined. When manipulating densities and probability functions, we use a summation sign that is to be understood as integration in the continuous case. Additionally, we drop the domain we summing over (e.g. $W \in \mathcal X_W$) if it is clear from the context. Any equation between densities is implicitly understood to hold ``only up to the uniqueness of the density'', i.e., except for a null set of the corresponding dominating measure. For discrete random variables, we use the notation from \citet{Miao2018} and let capitalization denote matrix notation, i.e., $p(U \, |\, Z,a,x) = (p(u_i \, | \, z_j,a,x))_{i,j=1}^{n,m}$ if $\mathcal{X}_U = \{u_1, \dots, u_n\}$ and $\mathcal{X}_Z = \{z_1, \dots, z_m\}$.
Furthermore, we denote d-separation in a graph $\mathcal{G}$ by $\dsep$ and conditional independence by $\indep$. We denote the power set by $\mathcal{P}$. 

\section{Causal models and the identification problem}
\label{sec:Causal_models_and_the_identification_problem}
To relate observational and interventional distributions, we need to specify a causal model that provides a mechanistic description of the relationship between observed variables. We adopt the widely used \textit{nonparametric structural equation model with independent errors} (NPSEM-IE) from \citet{Pearl2009} as our causal framework. Many less restrictive models are discussed in the literature and could be used instead of the NPSEM-IE, for example, the FFRCISTG model from \citet{Robins1986}. Much of causal theory is formulated in terms of graphs with vertices corresponding to random variables. We refer to \citet{Pearl2009} for the basic graph-theoretic definitions and recall the associated notations in \cref{sec:Graphs_and_DAGs}. 
\begin{definition}[NPSEM-IE] 
    \label{def:NPSEM-IE}
    Consider a set of random variables $V :=\{V_1, \dots, V_n\}$. Then $V$ follows an NPSEM-IE model with error variables $\epsilon_1, \dots, \epsilon_n$ taking values in $\mathcal{X}_{\epsilon_1}, \dots, \mathcal{X}_{\epsilon_n}$ and sets of parent variables $\pa[](V_i) \subseteq V$ if, for all $i = 1, \dots, n$:
    \begin{align*}
        V_i = f_i(\pa[](V_i), \epsilon_i),
    \end{align*}
    where the noise variables $\epsilon_1, \dots, \epsilon_n$ are jointly independent and each $f_i: \mathcal{X}_{\pa[](V_i)} \times \mathcal{X}_{\epsilon_i} \rightarrow \mathcal{X}_{V_i}$ is a measurable function 
    , called a \textit{structural equation}. We require that the graph $\mathcal{G}$ induced by $\pa(V_i) = \pa[](V_i)$ is a DAG, called the \textit{causal graph} of the NPSEM-IE. We call $p(V)$ the \textit{observational distribution}. 
    We summarize the model as a tuple $(V, f,\epsilon)$ with $f = (f_1, \dots,f_n)$ and $\epsilon = (\epsilon_1, \dots, \epsilon_n)$. For simplicity, we refer to the NPSEM-IE $(V,f, \epsilon)$ simply as a \textit{causal model}.
    
    For $A \subset V$ and $a \in \mathcal{X}_A$, we define the intervention operation $\text{do}(A=a)$ mapping $(V, f,\epsilon)$ to $(\tilde{V} , \tilde{f}, \epsilon)$ by replacing the structural equations for $A$ with the constant assignment $A = a$ and keeping the structural equations for $Y := V \setminus A$ unchanged. This yields a modified model $(\tilde{V}, \tilde{f}, \epsilon)$ defined by 
    \begin{alignat*}{2}
        &\tilde{A}_j \equiv a_j \quad &&\text{for} \quad A_j \in A, \\
        &\tilde{V}_i = f_i(\tilde{\text{pa}}(\tilde{V}_i), \epsilon_i) \quad &&\text{for} \quad V_i \in Y,
    \end{alignat*}
    such that the resulting causal graph $\tilde{\mathcal{G}}$ is a \textit{mutilated graph} of $\mathcal{G}$ \citep{Pearl2009}, keeping all edges from $\mathcal{G}$ except those leading to a variable $\tilde{A}_j$ that has been intervened on. We refer to the random variables $V_i(a) := \tilde{V}_i$ as \textit{potential outcomes} and define the distributions of the form $p(Y(a))$ for some $Y \subseteq V \setminus A$ as the \textit{interventional distributions} of $V_i$ after the intervention $\text{do}(A=a)$.
\end{definition}

\begin{figure}
        \centering
        \includegraphics[width=\linewidth]{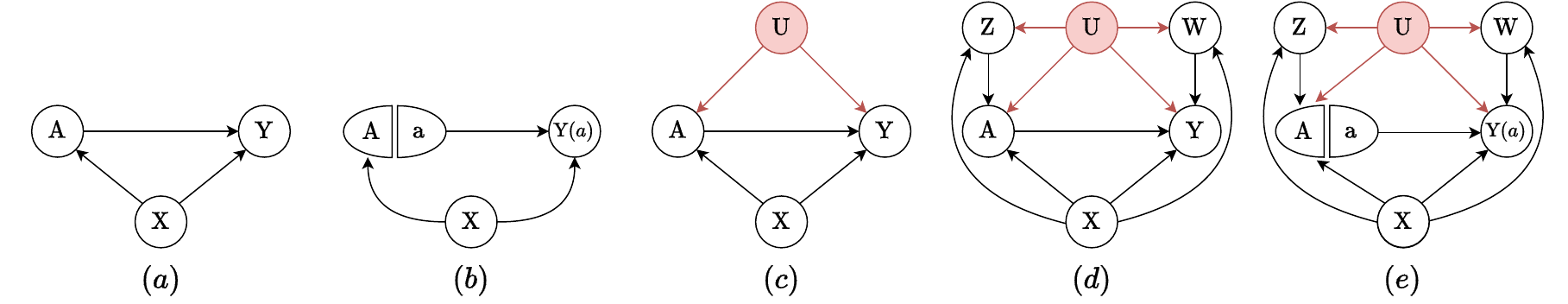}
        \caption{\textbf{(a)} A causal graph depicting the causal relations between $A$ (influenza vaccination), $Y$ (respiratory-disease-related hospitalization) and $X$ (age, gender) in the example medical trial.  \textbf{(b)} The corresponding SWIG for the causal model in $(a)$ after the intervention $\text{do}(A=a)$. 
        \textbf{(c)} Causal graph (a) with an added unmeasured confounder $U$ (health awareness) influencing both  treatment $A$ and outcome $Y$.
        \textbf{(d)} Causal graph (c) with treatment-inducing proxy $Z$ (cancer screening) and outcome-inducing proxy $W$ (traumatic injuries).
        \textbf{(e)} The corresponding SWIG for the causal model in $(d)$ after $\text{do}(A=a)$.}
        \label{fig:standard_proximal_graph}
\end{figure}

In the following, we will always assume that we are working within an NPSEM-IE causal model. The definition of the causal model introduces interventional distributions. These are examples of (Markov) kernels \citep{Shpitser2021}, which we also write as $\dodistr{X,Y}{}{a} := p(X(a),Y(a))$ with density $\dodistr{x,y}{}{a}$ and conditional density $\dodistr{x}{y}{a}$ of $X(a)$ given $Y(a) =y$. We refer to \citet{Kallenberg2021} for a rigorous introduction to (Markov) kernels and a proof of the existence of conditional densities for kernels. When referring to the entire kernel without specifying an interventional value assignment, we also write $\dodistr{X}{Y}{A}$ and refer to $A$ as the intervention set and $Y$ as the conditioning set. 

In the medical trial example discussed in \cref{sec:Introduction}, we take \cref{fig:standard_proximal_graph}(a) as a plausible causal graph. In the corresponding NPSEM-IE model, the structural equations can be interpreted as causal mechanisms that assign $A$ its observed value based on $X$ and $Y$ its observed value based on $X$ and $A$. For the intervention $\text{do}(A=a)$, we set $A = a$ constant while keeping the mechanisms for $X$ and $Y$ unchanged. We will now list some direct consequences of the NPSEM-IE model. Importantly, we can conclude conditional independence statements between factual variables $V_i$ and even between certain factual and counterfactual variables $V_i(a)$ from graphical criteria on causal graphs or subsidiary graphs derived from the causal graph. The most important graphical criterion is d-separation (see \citet{Pearl2009}) applied to the causal graph directly or to a Single World Intervention Graph (SWIG). In the following, SWIGs will  be used only as graphical tools to deduce ignorability conditions from causal graphs. We do not provide a formal introduction to SWIGs here and instead refer to \citet{Richardson2013}.
\begin{lemma}[d-separation in causal graphs and SWIGs] 
    In the NPSEM-IE, d-separation on the causal graph implies conditional independence of the corresponding random variables. For all disjoint $X,Y,Z \subseteq V$ in a causal graph $\mathcal{G}$
    \begin{align}
        \label{eq:d_sep_in_NPSEM}
        X \dsep Y \, |\, Z \quad \Longrightarrow \quad X \indep Y \,|\, Z.
    \end{align}
    We say that $V$ satisfies the \emph{global Markov property} with respect to the graph $\mathcal{G}$.
    Fix an intervention $\text{do}(A = a)$. If we apply the \textit{node splitting operation} described in \citet{Richardson2013} to a causal graph, we can apply d-separation to the resulting \textit{Single World Intervention Graph} (SWIG) $\mathcal{G}(A=a)$. Then, the set of random variables $A \cup (V \setminus A)(a)$ satisfies the global Markov property with respect to the SWIG.
    \footnote{To apply the theory of d-separation and factorization to a SWIG, we formally need to define these concepts in the case of mixed graphs containing both random and non-random variables. We refer to \citet{Richardson2013} or  \citet{Zhao2022} for a discussion of this issue.}
    \begin{proof}
        We refer to Chapter 3 in \citet{Pearl2009} for a proof of \eqref{eq:d_sep_in_NPSEM} and to \citet{Richardson2013} for a proof of the global Markov property for SWIGs.
    \end{proof}
\end{lemma}

The SWIG of our example trial under the intervention $\text{do}(A=a)$ is shown in \mbox{\cref{fig:standard_proximal_graph}(b)} and allows us to conclude the \textit{no unmeasured confounder} condition $Y(a) \indep A \, | \, X$, which is also called the \emph{ignorability} condition. Thus, in our model, the choice of vaccine $A$ is independent of the potential outcomes $Y(a)$ for any $a \in \mathcal{X}_A$, once we adjust for the covariates $X$. In this sense, we call $X$ a \textit{sufficient adjustment set}. We can use this conditional independence to calculate the interventional distribution $\dodistr{Y}{}{a}$. In general, conditional independence relations that we can deduce from the causal graph lead to a factorization of the observational and interventional distributions.
\begin{lemma}[Factorization and g-computation] 
    In the setting of \cref{def:NPSEM-IE}, the observational distribution $p(V)$ factorizes over the parent sets as
    \begin{align}
        p(V) = \prod_{V_i \in V} p(V_i \, | \, \pa(V_i)).
    \end{align}
    For every $A \subseteq V$ and $a \in \mathcal{X}_A$, the interventional distribution $\dodistr{Y}{}{a}$ of $Y = V \setminus A$ after $\text{do}(A= a)$ decomposes into the truncated factorization 
    \begin{align}
        \label{eq:truncated_factorization}
        \dodistr{Y}{}{a}= \prod_{V_i \in Y} p(V_i \, | \, \pa(V_i)) \big|_{A = a},
    \end{align}
    where $p(V_i \, | \, \pa(V_i)) \big|_{A = a}$ signifies that we substitute the corresponding value $a_j$ for any $A_j \in \pa(V_i) \cap A$ in the corresponding factor $p(V_i \, | \, \pa(V_i))$. This factorization is commonly referred to as the \textit{g-formula.}
    \begin{proof}
        We refer to Chapter 3 in \citet{Pearl2009} for a proof and discussion of these results.
    \end{proof}
\end{lemma}
Notably, the g-formula only uses conditionals of the observational distribution and the parent structure of the causal model encoded in its causal graph. Thus, if we have a causal model with its causal graph and its entire joint observational distribution, we can identify all interventional distributions! In our example, this yields the standard backdoor adjustment formula after marginalization. If we further assume positivity $p(a \, | \, x) > 0$ for all $(a,x) \in \mathcal{X}_{(A,X)}$, the corresponding inverse probability weighted (IPW) formula follows directly from the definition of conditional densities.
\begin{alignat}{4}
    &\text{(Backdoor)}\qquad \qquad \qquad \dodistr{y}{}{a} &\; = \;&\sum_{x} p(y \mid a,x)\,p(x)
    &\qquad& \label{eq:backdoor_formula}\\
    &\text{(IPW)}\qquad \qquad \qquad                &\;=\; &\sum_{x} \frac{p(y,a,x)}{p(a \mid x)}.
    &\qquad& \label{eq:ipw_formula}
\end{alignat}
Alternatively, both the backdoor and the IPW formula can be deduced directly from algebraic conditions. As we will use versions of this argument repeatedly, we introduce consistency, ignorability, and positivity conditions and give a detailed proof for future reference. 
\begin{proposition}
    \label{prop:basic_backdoor_IPW_formalae}
    Under the following assumptions, the interventional kernel $\dodistr{y}{}{a}$ can be identified using the backdoor formula \eqref{eq:backdoor_formula} or the IPW formula \eqref{eq:ipw_formula}.
    \begin{enumerate}
        \item \textbf{Consistency:} $Y(A) = Y$ almost surely,
        \item \textbf{Positivity:} for all $a \in \mathcal{X}_A$: $p(a \, | \, X) > 0$  $P_X$-almost surely,
        \item \textbf{Ignorability:} $Y(a) \indep A \, |\, X$ for all $a \in \mathcal{X}_A$.
    \end{enumerate}
    \begin{proof}
        Consistency implies that for any choices of $y,a$ and $x$, the sets $\{Y(a) =y, A=a, X=x\}$ and $\{Y=y, A=a,X=x\}$ differ only on a null set. As we assume the existence of joint and conditional densities, we have  $p(Y(a) = y \,|\,  A=a, X=x) = p(Y = y \,|\,  A=a, X=x)$, $P_{(A,X)}$-almost surely. The positivity assumption then allows us to strengthen this result to $(P_A \otimes P_X)$-almost sure equality. Then we easily calculate
        \begin{align*}
            \dodistr{y}{}{a} &= \sum_{x} p(Y(a) = y \,|\,  X = x) \, p(X=x) \\
            &= \sum_{x} p(Y(a) = y \,|\,  A=a, X = x) \, p(X=x) &\text{(by ignorability)}\\
            &= \sum_{x} p(Y = y \,|\,  A=a, X = x) \, p(X=x) &\text{(by consistency and positivity)} \\
            &= \sum_{x} \frac{p(Y = y ,  A=a, X = x)}{p(A=a \, | \, X=x)} . &\text{(by positivity)}
        \end{align*}
    \end{proof}
\end{proposition}

We note that consistency is implicit in the structural equations of the NPSEM-IE framework and will not be stated as a separate assumption in the following. Positivity is, in many settings, also a relatively mild assumption \citep{hernan2025}. In contrast,  ignorability is often unrealistic in practice, as it requires that all relevant confounders are measured and  included in $X$. In the example trial described in \cref{sec:Introduction}, the health awareness $U$ of a given patient could act as a confounder, influencing both the treatment $A$ (vaccination) and the outcome $Y$ (hospitalization). However, it is very difficult to quantify ``health awareness'' in a methodologically sound way, effectively rendering $U$ unobserved (also called latent). Therefore, we need to consider partially unobserved causal models over a set of random variables $V \cup U$, where only the joint distribution $p(V)$ over $V$ is observed and $U$ is latent. The corresponding causal graph is shown in \mbox{\cref{fig:standard_proximal_graph}(c)}. Can we still determine the effect that intervening on $A$ will have on $Y$? This naturally leads to the nonparametric identification problem that has been widely discussed in the literature. We refer to \citet{Shpitser2021} for a recent review.

\begin{boxA}
\begin{problem}[Nonparametric identification problem] 
\label{prob:nonparametric_identification_problem}
Consider a causal model with observed variables $V$ and latent variables $U$.
Assume that the model induces the causal graph $\mathcal{G}(V \cup U)$.
Fix an intervention set $A \subseteq V$ and an outcome set $Y \subseteq V \setminus A$.
Under these assumptions, are the interventional distributions $\dodistr{Y}{}{a}$
uniquely determined by the observational distribution $p(V)$?
\end{problem}
\end{boxA}

Since we are interested in \textit{unique} identification, we need to consider the set of all causal models that agree on a fixed observational distribution $p(V)$ and induce the same causal graph $\mathcal{G}(V \cup U)$. Our objective is to determine whether an interventional distribution $\dodistr{Y}{}{a}$ is identical across all these models. Then we can derive identification formulas that uniquely recover these interventional kernels $\dodistr{Y}{}{a}$ from $p(V)$. There are multiple approaches to solving this problem, including backdoor adjustment and front-door adjustment \citep{Pearl2009}, culminating in the ID algorithm in \citet{Shpitser2006} that provides a sound and complete identification algorithm. We discuss the ID algorithm in \cref{sec:Identification_algorithms}. 

Coming back to our example setting, we can construct two NPSEM-IE models with causal graph \mbox{\cref{fig:standard_proximal_graph}(c)} that generate the same distribution on the observed variables $A,Y$ and $X$, but different interventional distributions $\dodistr{Y}{}{a}$ \citep{Pearl2009}. This shows that $\dodistr{Y}{}{a}$ is \textit{not identified} in our example because of the unobserved confounding variable $U$! 

To make progress in our example, we need to further restrict the set of causal models that we are considering by imposing additional assumptions. This restricted set of models can be expected to agree on more interventional distributions. In other words, under stronger assumptions, we can obtain correspondingly stronger identification results, potentially enabling the identification of $\dodistr{Y}{}{a}$. 

\section{Outcome and treatment bridge functions}
\label{sec:outcome_and_treatment_bridge_functions}
\subsection{Motivation: Proximal causal Inference}
The main ideas of proximal causal inference (PCI) originate from the concept of \textit{negative control variables}, which have been widely used in experimental biology and epidemiology \citep{Lipsitch2010}. In our example, we introduce two additional observed variables $Z$ and $W$, as illustrated in the causal graph \mbox{\cref{fig:standard_proximal_graph}(d)}. These variables serve as proxies for the unobserved confounder 
$U$.

First, we additionally measure the occurrence of traumatic injuries as a \textit{negative control outcome} $W$.  Based on medical domain knowledge, it is reasonable to assume that $A$, the influenza vaccination, does not causally affect $W$. Furthermore, we can assume that the health awareness $U$ of patients impacts the rate at which traumatic injuries occur. In this sense, we call $W$ \textit{U-relevant}.

Second, we also measure the number of cancer screenings that a patient received as a \textit{negative control exposure} $Z$. Again, domain knowledge suggests that $Z$ has no direct causal effect on either $Y$ or $W$, since a cancer screening should neither cause nor be caused by infectious diseases or traumatic injuries. Similarly to $W$, $Z$ is also influenced by the  health awareness $U$, and is therefore also U-relevant. The corresponding causal graph including $W$ and $Z$ is shown in \mbox{\cref{fig:standard_proximal_graph}(d)}. For completeness, we include possible relationships between $Z$ and $A$ as well as $W$ and $Y$ in the graph. 

We carefully chose $W$ and $Z$ to be $U$-relevant, meaning that $W$ and $Z$ represent observable effects of the unmeasured confounder $U$. Compared to our discussion of causal graphs in \cref{sec:Causal_models_and_the_identification_problem}, we use the dependence of $W$ and $Z$ on $U$, instead of only using (conditional) independencies between variables. Our domain knowledge suggests that after conditioning on $A$, $Y$ and $X$, all association between $W$ and $Z$ that is present in our data arises through their shared dependence on $U$. This suggests that variation in $W$ and $Z$ can be used to indirectly adjust for $U$ when trying to approximate the causal effect that $A$ has on $Y$. In this sense, we can consider $W$ and $Z$ as specially chosen proxies of the latent variable $U$. To emphasize this, we adopt the terminology from \citet{TchetgenTchetgen2024} and will refer to $W$ as an \emph{outcome-inducing proxy} and to $Z$ as a \emph{treatment-inducing proxy}. 

To translate this intuition into a mathematical idea, we recall that both the backdoor adjustment \cref{eq:backdoor_formula} and the IPW formula \cref{eq:ipw_formula} rely on conditioning on a sufficient adjustment set. In our setting, the set $\{U,X\}$ would be sufficient; however, the variable $U$ is unobserved and cannot be directly used for adjustment. Only conditioning on $X$ will, in general, yield biased estimates of $\dodistr{Y}{}{a}$. Proximal causal inference circumvents this issue by using the proxy variables $Z$ and $W$ to modify these identification strategies and to account for unmeasured confounding. First, we take (generally insufficient) adjustment sets $\{W,X\}$ for the backdoor formula and $\{Z,X\}$ for the IPW formula. In order to account for the unmeasured confounder $U$, we modify the term $p(y \, |\, a,x ,w)$ in the backdoor formula using an \textit{outcome bridge function}, or the term $p(a \, |\, x,z)^{-1}$ in the IPW formula using a \textit{treatment bridge function}. Instead of attempting to recover $U$, we only infer how to correctly adjust the relevant components of the identification formulas. 

In the remainder of this section, we first present a method to adjust the backdoor formula using an \textit{outcome bridge function} as developed by \citet{Miao2018, TchetgenTchetgen2024}. This allows for the identification of $\dodistr{Y}{}{a}$ and, under further assumptions, identification of $\dodistr{Y,Z,X}{}{a}$, as detailed in \citet{Shpitser2021}. Second, we discuss the adjustment of the IPW formula using a \textit{treatment bridge function}. Treatment bridge functions were first introduced in \citet{Cui2023} to identify the \textit{average treatment effect} (ATE). We expand on this to identify the distribution $\dodistr{Y}{}{a}$ and, under further assumptions, $\dodistr{Y,W,X}{}{a}$. Finally, we introduce \emph{extended bridge functions} and present two novel identification results for the entire interventional joint distribution $\dodistr{Y,W,Z,X}{}{a}$. To our knowledge, extended bridge functions and identification results for the entire interventional joint distribution have not been presented in the literature before.

\subsection{Setup and assumptions}
An overview of the different assumptions used in the following derivations is shown in \cref{fig:PCI_assumption_overview}. Again, we consider a set of random variables $A,Y,X,W,Z$ and an unobserved variable $U$. As described above, standard identification theory uses conditional independence relations between factual and counterfactual quantities and exclusion restrictions. Within the NPSEM-IE framework, these relations are fully determined by the causal graph or by SWIGs derived from it. In this sense, standard identification theory can be understood as operating purely on the causal graph.\footnote{ Technically, positivity assumptions could be regarded as ``non-graphical'' assumptions required in identification theory, but we ignore this distinction here to emphasize the fundamentally different nature of completeness conditions in PCI.} Similarly, PCI also relies on a set of assumptions that can be represented graphically. To be as general as possible, we present these assumptions algebraically instead of referring to a specific graph.
\begin{assumption}[Latent ignorability]
\label{ass:latent_ignorability}
 $Y(a) \indep A \, | \, U,X$ for all $a \in \mathcal{X}_A$.    
\end{assumption}
\begin{assumption}[Valid outcome-inducing proxy]
\label{ass:valid_outcome_inducing_proxy}
 $W \indep (Z,A) \, | \, U,X$.    
\end{assumption}
\begin{assumption}[Valid treatment-inducing proxy]
\label{ass:valid_treatment_inducing_proxy}
 $Z \indep Y \, | \, A,U,X.$    
\end{assumption}
\begin{assumption}[Latent positivity]
\label{ass:latent_positivity}
For all $a \in \mathcal{X}_A$: $p(a \, |\, U,X) > 0$ $P_{(U,X)}$-almost surely.   
\end{assumption}
Following the standard rules of conditional independence \citep{Pearl2009}, we often also use the direct consequences of \cref{ass:valid_outcome_inducing_proxy}
\begin{alignat*}{4}
    W &\indep Z \, | \, U,X  \quad &\text{and} \quad &W \indep A \, | \, U,X \quad \qquad &&\text{by decomposition,} \\
    W &\indep Z \, | \, A,U,X  \quad &\text{and} \quad &W \indep A \, | \, Z,U,X \quad \qquad &&\text{by weak union}.
\end{alignat*}
The \cref{ass:latent_ignorability,ass:valid_outcome_inducing_proxy,ass:valid_treatment_inducing_proxy} hold, for example, in the causal graph depicted in \mbox{\cref{fig:standard_proximal_graph}(d)} and its SWIG \mbox{\cref{fig:standard_proximal_graph}(e)} after the intervention $\text{do}(A=a)$. We refer to the Appendix in \citet{TchetgenTchetgen2024} for an overview of more possible causal graphs for which \cref{ass:latent_ignorability,ass:valid_outcome_inducing_proxy,ass:valid_treatment_inducing_proxy} hold.

Notably, unlike standard identification theory, PCI additionally requires assumptions that cannot be deduced from the causal graph alone. When we motivated the use of negative control variables, we needed to argue about U-relevance of the proxies $W$ and $Z$. As an extension of U-relevance, PCI needs assumptions to ensure sufficient dependence between $U$ and $W$ as well as $U$ and $Z$. This is formalized through completeness conditions and conditions on the solvability of integral equations. These assumptions are inherently non-graphical and are a key difference between PCI and classical identification theory.

 \begin{figure}[h!]
        \centering
        \includegraphics[width=0.9\linewidth]{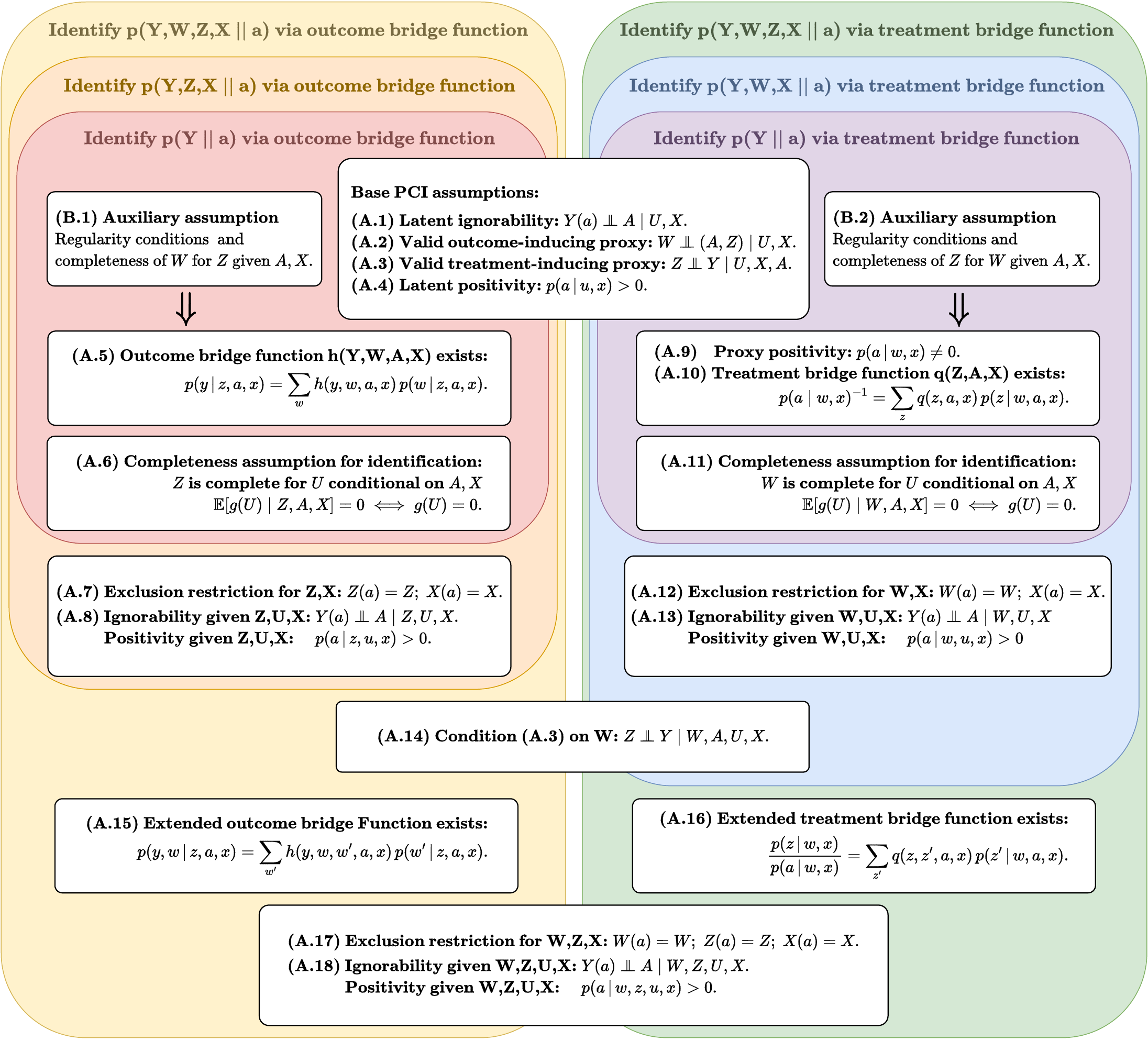}
        \caption{This figure gives an overview of different assumptions used in the PCI identification results in \cref{sec:outcome_and_treatment_bridge_functions}.
        Some of the assumptions were shortened for better readability. The layout of this figure was adapted from \citet{Ringlein2025}.}
        \label{fig:PCI_assumption_overview}
\end{figure}

\subsection{Outcome bridge functions}
\label{sec:Outcome_bridge_functions}
We start by deducing an identification result using an outcome bridge function as a modification of the backdoor formula \cref{eq:backdoor_formula}. The core assumption is the existence of an outcome bridge function $h$.

\begin{assumption}[Existence of outcome bridge function]
    \label{ass:outcome_bridge_function}
    There exists an outcome bridge function $h(y,w,a,x)$, solving the integral equation
    \begin{align}
        \label{eq:outcome_bridge_function}
        \sum_{w} h(y,w,a,x) \, p(w \, | \, z,a,x) = p(y \, | \, z,a,x).
    \end{align}
\end{assumption}
Formally, equation \eqref{eq:outcome_bridge_function} constitutes a \textit{Fredholm equation} of the first kind \citep{Miao2018, Carrasco2007}, which can be regarded as a generalization of a matrix equation. Intuitively, the bridge function $h$ defines a linear integral transformation that recovers the effect of $Z$ on $Y$ from the effect of $Z$ on $W$, both conditional on $A$ and $X$. A sufficient condition for the existence of a solution is given in \cref{sec:conditions_for_existence_of_bridge_functions} in \cref{auxass:outcome_bridge}. Earlier work of \citet{Miao2018, TchetgenTchetgen2024} uses a different bridge equation that targets the conditional expectation $\mathbb{E}[Y \, | \, Z=z, A =a, X =x]$ on the RHS and uses a bridge function $h$ with no dependence on $y$. While \citet{Miao2018} aim to identify the moments $\mathbb{E}[Y(a)]$, we follow the approach discussed in \citet{Shpitser2021} to identify the entire distribution $\dodistr{Y}{}{a}$, which uses the bridge equation in \eqref{eq:outcome_bridge_function}. Notably, the solution $h$ is not necessarily unique. Approaches to approximate $h$ from data include standard Generalized Method of Moments (GMM) methods using parametric function classes for $h$ \citep{TchetgenTchetgen2024,Miao2018}, kernel-based methods \citep{Mastouri2021}, or minimax learning approaches \citep{Kallus2021}.

As a second core assumption, we require completeness of $Z$ for $U$, conditional on $A$ and $X$.
\begin{assumption}[Completeness]
\label{ass:outcome_completeness}
For any $g \in L^2(U)$ and all $a \in \mathcal{X}_A,x \in \mathcal{X}_X$
\begin{align*}
    \mathbb{E}[g(U) \, | \, Z,A=a,X=x] = 0 \text{ almost surely } \ \Longleftrightarrow \ g(U) = 0 \text{ almost surely}.
\end{align*}
\end{assumption}
We note that completeness gives an equivalence between an $P_Z$-almost sure statement and a $P_U$-almost sure statement, for all values of $a \in \mathcal{X}_A$ and $x \in \mathcal{X}_X$. Intuitively, \cref{ass:outcome_completeness} ensures that $Z$ has sufficient variability with respect to $U$ and is U-relevant. If we take $Z$ and $U$ to have finite state spaces, \citet{Miao2018} showed that the completeness assumption reduces to a matrix rank condition.
\begin{corollary}
    \label{cor:discrete_completeness_matrix_rank}
    If $U$ and $Z$ are discrete random variables with state spaces $\mathcal{X}_U = \{u_1, \dots, u_n\}$ and $\mathcal{X}_Z = \{z_1, \dots, z_m\}$, the completeness \cref{ass:outcome_completeness} is equivalent to 
    \begin{align*}
        \forall \; a \in \mathcal{X}_A, x \in \mathcal{X}_X:
        \; (p(u_i \, | \, z_j,a,x))_{i,j=1}^{n,m} \in \mathbb{R}^{n\times m}\;\text{ having full row-rank.}
    \end{align*}
    In particular, this requires $m\geq n$.
    \begin{proof}
        See \citet{Miao2018}.
    \end{proof}
\end{corollary}

With these assumptions we can prove the first proximal identification result, adapting the proof of Theorem 1 in \citet{Shpitser2021}, which is based on the main result in \citet{Miao2018}.

\begin{proposition}[Proximal backdoor formula]
    \label{prop:Proximal_backdoor_formula}
    Under \cref{ass:latent_ignorability,ass:valid_outcome_inducing_proxy,ass:valid_treatment_inducing_proxy,ass:latent_positivity,ass:outcome_bridge_function,ass:outcome_completeness} we can identify the distribution $\dodistr{Y}{}{a}$ as
    \begin{align}
        \label{eq:outcome_bridge_identifying_functional}
        \dodistr{y}{}{a} = \sum_{w,x} h(y,w,a,x)\, p(w,x).
    \end{align}
    
\begin{proof}
We express $p(y \, |\, z,a,x)$ in two different ways. First, by marginalizing over $p(y,u \, | \, a,z,x)$ and second, using the bridge equation. This yields the equation chain
\begin{align*}
\sum_u p(y \, | \, a,u,x) \, p(u \, | \, z,a,x) &= p(y \, |\, z,a,x)  && \text{(by \ref{ass:valid_treatment_inducing_proxy})} \\
&=  \sum_w h(y,w,a,x)\, p(w \,|\, z,a,x)
&& \text{(by \ref{ass:outcome_bridge_function})} \\
&= \sum_u \sum_w h(y,w,a,x)\, p(w \,|\, u,x) \,p(u \, | \, z, a,x).
&& \text{(by \ref{ass:valid_outcome_inducing_proxy})}
\end{align*}
We explicitly state the steps of the completeness argument once in this proof. In all following proofs, we will shorten this argument and appeal to completeness directly. Subtracting the first and last terms in the equation chain yields
\begin{align*}
    0 = \sum_u \underbrace{\left(p(y \, | \, a,u,x) 
    - \sum_{w} h(y,w,a,x) \, p(w \, | \, u,x) \right)}_{=: \, g_{y,a,x}(u)} \, p(u \,| \, z,a,x).
\end{align*}
Under sufficient integrability\footnote{Technically, we would have to add integrability as an additional assumption on $h$. For readability and in accordance with \citet{Shpitser2021} and \citet{TchetgenTchetgen2024}, we always silently assume that bridge functions are sufficiently integrable or even bounded.},
we can apply the completeness assumption \ref{ass:outcome_completeness} to all $g_{y,a,x} \in L^2(U)$ to conclude
\begin{align}
    \label{eq:outcome_bridge_with_u}
    p(y \, | \, a,u,x) = \sum_{w} h(y,w,a,x) \, p(w \, | \, u,x) \quad \text{for all $y \in \mathcal{X}_Y, a \in \mathcal{X}_A, x \in \mathcal{X}_X$, $P_U$-a.s.}
\end{align}
It only remains to apply this identity to the backdoor formula with the adjustment set $\{U,X\}$, which is sufficient by \ref{ass:latent_ignorability} and \ref{ass:latent_positivity}.\footnote{In the NPSEM-IE setting we are considering, consistency holds by definition and we need not add it as an extra assumption.} This yields
\begin{align*}
    \dodistr{y}{}{a} = \sum_{x,u} p(y \, | \, a,u,x) \, p(u,x) = \sum_x \sum_w h(y,w,a,x) \, \sum_u p(w \, | \, u,x) \, p(u,x) = \sum_{w,x} h(y,w,a,x)\, p(w,x).
\end{align*}
\end{proof}
\end{proposition}
Notably, the resulting identifying functional \eqref{eq:outcome_bridge_identifying_functional} has a similar form to the backdoor formula \eqref{eq:backdoor_formula} with insufficient adjustment set $(W,X)$ but exchanging $p(y \, |\, w,a,x)$ for the bridge function $h(y,w,a,x)$. In the next step, we similarly extend this approach to identify the joint distribution $\dodistr{Y,Z,X}{}{a}$ with a modification of the truncated factorization \eqref{eq:truncated_factorization} using the outcome bridge function $h$. The result is a special case of Theorem 5 in \citet{Shpitser2021}, which shows a similar result for general probability kernels. We require the following stronger assumption. By inspecting the causal graph \mbox{\cref{fig:standard_proximal_graph}(d)} and SWIG \mbox{\cref{fig:standard_proximal_graph}(e)}, we conclude that these stronger assumptions hold in our example setting. 
\begin{assumption}[Exclusion restrictions for $Z,X$]
\label{ass:exclusion_restriction_Z_X}
$Z(a) = Z$ and $X(a) = X$ a.s. for all $a \in \mathcal{X}_A$.    
\end{assumption}
\begin{assumption}[Ignorability and positivity given $Z,U,X$]
\label{ass:ignorability_given_Z}
$Y(a) \indep A \, | \, Z,U,X$ for all $a \in \mathcal{X}_A$ and furthermore 
for all $a \in \mathcal{X}_A$: $p(a \, | \, Z,U,X) > 0$  $P_{(Z,U,X)}$-almost surely.
\end{assumption}

\begin{corollary}[Proximal backdoor formula including $Z$] 
    \label{prop:proximal_backdoor_extended}
    Under \cref{ass:valid_outcome_inducing_proxy,ass:valid_treatment_inducing_proxy,ass:outcome_bridge_function,ass:outcome_completeness} and additionally \cref{ass:exclusion_restriction_Z_X,ass:ignorability_given_Z}, we can identify the distribution $\dodistr{Y,Z,X}{}{a}$ as
    \begin{align}
        \label{eq:outcome_bridge_identifying_functional_extended}
        \dodistr{y,z,x}{}{a} = \sum_{w} h(y,w,a,x)\, p(w,z,x).
    \end{align}
\begin{proof}
    We use \cref{ass:valid_outcome_inducing_proxy,ass:valid_treatment_inducing_proxy,,ass:outcome_bridge_function,ass:outcome_completeness} as in \cref{prop:Proximal_backdoor_formula} to arrive at equation \eqref{eq:outcome_bridge_with_u}.
    Using the additional \cref{ass:exclusion_restriction_Z_X,ass:ignorability_given_Z} and consistency, we conclude
    \begin{align*}
        \dodistr{y,z,x}{}{a} &= \sum_{u} p(y \,|\,  a,z,u,x) \, p(z,u,x) \\
                  &= \sum_{u} p(y \,|\,  a,u,x) \, p(z,u,x) && \text{(by \ref{ass:valid_treatment_inducing_proxy})} \\
                  &= \sum_w h(y,w,a,x)\, \sum_u p(w \,|\, u,x) \,p(z, u,x) && \text{(by \eqref{eq:outcome_bridge_with_u}}) \\
                  &= \sum_w h(y,w,a,x)\, \sum_u p(w \,|\, z,u,x) \,p(z, u,x) && \text{(by \ref{ass:valid_outcome_inducing_proxy})} \\
                  &= \sum_{w} h(y,w,a,x)\, p(w,z,x).
    \end{align*}
\end{proof}
\end{corollary}

So far, we have not described a method of identifying interventional distributions containing the outcome-inducing proxy $W$. To this end, we present an alternative approach.

\subsection{Treatment bridge functions}
\label{sec:treatment_bridge_functions}
\citet{Cui2023} introduced treatment bridge functions to identify the moments $\mathbb{E}[Y(a)]$, but note that the treatment bridge function could also be used to identify the interventional distribution $\dodistr{Y}{}{a}$ (see Remark 6 in \citet{Cui2023}). We elaborate on their remark and extend their method to identify the joint distribution $\dodistr{Y,W,X}{}{a}$. Instead of starting from the backdoor formula  \cref{eq:backdoor_formula} as in \cref{sec:Outcome_bridge_functions}, we use the equivalent IPW formula \cref{eq:ipw_formula} with (insufficient) adjustment set $(Z,X)$ and modify it using a treatment bridge function $q$.
\begin{assumption}[Proxy positivity]
\label{ass:proxy_positivity}
For all $a \in \mathcal{X}_A$: $p(a \, |\, W,X) > 0$ $P_{(W,X)}$-almost surely.   
\end{assumption}
\begin{assumption}[Treatment bridge function]
    \label{ass:treatment_bridge_function}
    There exists a treatment bridge function $q(z,a,x)$, solving the integral equation
    \begin{align}
        \label{eq:treatment_bridge_function}
        \sum_{z} q(z,a,x) \, p(z \, | \, w,a,x) = \frac{1}{p(a \, | \, w,x)},
    \end{align}
    in particular requiring positivity \cref{ass:proxy_positivity}. 
\end{assumption}
Sufficient conditions for the existence of a treatment bridge function are given in \citet{Cui2023}. We recall these conditions in \cref{sec:conditions_for_existence_of_bridge_functions} in \cref{auxass:treatment_bridge}. In addition, we require a completeness condition of $W$ with respect to $U$, similar to \cref{ass:outcome_completeness}.
\begin{assumption}[Completeness]
\label{ass:treatment_completeness}
For any $g \in L^2(U)$ and all $a \in \mathcal{X}_A$ and $x \in \mathcal{X}_X$
\begin{align*}
    \mathbb{E}[g(U) \, | \, W,A=a,X=x] = 0 \text{ almost surely} \ \Longleftrightarrow \ g(U) = 0 \text{ almost surely.}
\end{align*}
\end{assumption}

\begin{proposition}[Proximal IPW formula]
    \label{prop:Proximal_IPW_formula}
     Under \cref{ass:valid_treatment_inducing_proxy,ass:valid_outcome_inducing_proxy,ass:latent_ignorability,ass:latent_positivity,ass:proxy_positivity,ass:treatment_completeness,ass:treatment_bridge_function} we can identify the distribution $\dodistr{Y}{}{a}$ as
    \begin{align*}
        \dodistr{y}{}{a} = \sum_{x,z} q(z,a,x) \ p(y,z,a,x).
    \end{align*}
    This result implies the identification formula for the ATE given in Theorem 2.2 of \citet{Cui2023}.
    \begin{proof}
        We express $p(a | w,x)^{-1}$ in two different ways. First, by marginalizing a suitable kernel over $u$, and
        second by using the bridge equation. Assuming the necessary positivity \cref{ass:latent_positivity,ass:proxy_positivity} when dividing by densities, this yields the equation chain
        \begin{align*}
            \sum_u \frac{1}{p(a \, |\, u,x)} \, p(u \, | \, w,a,x)
            = &\sum_u \frac{1}{p(a \, |\, u,w,x)} \, p(u \, | \, w,a,x) && \text{(by \ref{ass:valid_outcome_inducing_proxy})} \\
            = &\sum_u \frac{1}{p(a \, |\, u,w,x)} \, \frac{p(a \, | \, u,w,x) \, p(u \, |\, w,x)}{p(a \, |\, w,x)}  && (\text{Bayes})\\
            = &\frac{1}{p(a \, | \, w,x)} \\
            = &\sum_z q(z,a,x) p(z \, |\, w,a,x) && \text{(by \ref{ass:treatment_bridge_function})} \\
            = &\sum_u \sum_z q(z,a,x) p(z \, |\,a,u,x)  \ p(u \, | \, w,a,x).  && \text{(by \ref{ass:valid_outcome_inducing_proxy})}
        \end{align*}
        Comparing the first and last terms, we may apply the completeness assumption \ref{ass:treatment_completeness} to conclude
        \begin{align}
             \label{eq:treatment_bridge_with_u}
             \frac{1}{p(a \, |\, u,x)} = \sum_z q(z,a,x) \, p(z \, |\, a,u,x).
        \end{align}
        Using the adjustment set $(U,X)$, which is sufficient by \ref{ass:latent_ignorability} and \ref{ass:latent_positivity}, we obtain
        \begin{align*}
            \dodistr{y}{}{a} =  &\sum_{x,u} p(y \, | \, a,u,x) \, p(u,x) =\sum_{x,u} \frac{p(y,a,u,x)}{p(a\,|\,u,x)} \\
            = &\sum_{x,u} p(y,a,u,x) \left(\sum_z q(z,a,x) p(z \, |\,a,u,x)\right) && \text{(by \ref{eq:treatment_bridge_with_u})} \\
            = &\sum_{x,u} \sum_z  \big(p(y \, |\,a,u,x) p(a,u,x) \big) \   q(z,a,x) p(z \, |\,a,u,x) \\
            = &\sum_{x,u}  \sum_z q(z,a,x) \ p(y,z,a,u,x)  && \text{(by \ref{ass:valid_treatment_inducing_proxy})} \\
            = &\sum_{x,z} q(z,a,x) \ p(y,z,a,x).
        \end{align*}
        Taking the expectation of $\dodistr{Y}{}{a}$ for fixed $a$ allows us to identify $\mathbb{E}[Y(a)]$. Taking the difference $\mathbb{E}[Y(1)] - \mathbb{E}[Y(0)]$ recovers the following formula for the ATE that was derived in \citet{Cui2023}
        \begin{align*}
            \text{ATE} = \sum_{a \in \{0,1\}} \sum_{y,z,x} (-1)^{1-a} y  \cdot q(z,a,x) \ p(y,z,a,x).
        \end{align*}
    \end{proof}
\end{proposition}

Analogous to the previous section, additional assumptions allow for the identification of $\dodistr{Y,W,X}{}{a}$ using the treatment bridge function $q$. All three assumptions hold in the example setting from \mbox{\cref{fig:standard_proximal_graph}(d)}. 
\begin{assumption}[Exclusion restrictions for $W,X$]
\label{ass:exclusion_restriction_W_X}
$W(a) = W$ and $X(a) = X$ a.s. for all $a \in \mathcal{X}_A$.    
\end{assumption}
\begin{assumption}[Ignorability and positivity given $W,U,X$]
\label{ass:ignorability_given_W}
$Y(a) \indep A \, | \, W,U,X$ for all $a \in \mathcal{X}_A$ and 
for all $a \in \mathcal{X}_A$: $p(a \, |\, W,U,X) > 0$ $P_{(W,U,X)}$-almost surely.   
\end{assumption}

\begin{assumption}[Conditioning \ref{ass:valid_treatment_inducing_proxy} additionally on $W$]
\label{ass:valid_treatment_inducing_proxy_with_W}
$Z \indep Y \, | \, W,A,U,X$.  
\end{assumption}

\begin{corollary}[Proximal IPW formula including $W$]
    \label{prop:proximal_IPW_formula_extended}
     Under \cref{ass:valid_outcome_inducing_proxy,ass:latent_positivity,ass:proxy_positivity,ass:treatment_bridge_function,ass:treatment_completeness} and additionally the stronger \cref{ass:exclusion_restriction_W_X,ass:ignorability_given_W,ass:valid_treatment_inducing_proxy_with_W}, we identify the distribution $\dodistr{Y,W,X}{}{a}$ as 
    \begin{align}
        \label{eq:treatment_bridge_identifying_functional}
        \dodistr{y,w,x}{}{a} = \sum_{z} q(z,a,x) \ p(y,w,z,a,x).
    \end{align}
    \begin{proof}
        We use \cref{ass:valid_outcome_inducing_proxy,ass:latent_positivity,ass:proxy_positivity,ass:treatment_completeness,ass:treatment_bridge_function} as in \cref{prop:Proximal_IPW_formula} to arrive at \cref{eq:treatment_bridge_with_u}.
        Using the additional \cref{ass:exclusion_restriction_W_X,ass:ignorability_given_W} and consistency, we conclude similarly to \cref{prop:proximal_backdoor_extended}
        \begin{align*}
            \dodistr{y,w,x}{}{a} = &\sum_{u} p(y \, | \, w,a,u,x) \, p(w,u,x)
            = \sum_{u} \frac{p(y,w,a,u,x)}{p(a\,|\,u,x)}  && \text{(by \ref{ass:valid_outcome_inducing_proxy})}\\
            = &\sum_{u} p(y,w,a,u,x) \left(\sum_z q(z,a,x) p(z \, |\,a,u,x)\right)   && \text{(by \eqref{eq:treatment_bridge_with_u})} \\
            = &\sum_{u} \sum_z \big(p(y \, |\,w,a,u,x) p(w,a,u,x) \big) \ q(z,a,x) p(z \, |\,w,a,u,x)    && \text{(by \ref{ass:valid_outcome_inducing_proxy})}\\
            = &\sum_{u,z}  q(z,a,x) \ p(y,w,z,a,u,x) && \text{(by \ref{ass:valid_treatment_inducing_proxy_with_W})}\\
            = &\sum_{z} q(z,a,x) \ p(y,w,z,a,x).
        \end{align*}
    \end{proof}
\end{corollary}

Combining the results from \cref{prop:proximal_backdoor_extended} and \cref{prop:proximal_IPW_formula_extended}, we derived identification results for marginals of the interventional distributions containing either $Z$ or $W$. Still, we have not identified the joint interventional distribution $\dodistr{Y,W,Z,X}{}{a}$ containing both $W$ and $Z$. To this end, we need to introduce new bridge functions and impose further conditions. 

\subsection{Extended bridge functions}
\label{sec:extended_bridge_functions}
We present two novel methods of identifying the joint interventional distribution $\dodistr{Y,W,Z,X}{}{a}$  via \emph{extended outcome bridge functions} or \emph{extended treatment bridge functions}. We define these bridge functions as solutions to the following integral equations.
\begin{assumption}[Extended outcome bridge function]
     \label{ass:extended_outcome_bridge_function}
    There exists an extended outcome bridge function $h(y,w,w',a,x)$, solving the integral equation
    \begin{align}
        \label{eq:extended_outcome_bridge}
        \sum_{w'} h(y,w,w',a,x) \, p(w' \, | \, z,a,x) = p(y,w \, | \, z,a,x).
    \end{align}
\end{assumption}

\begin{assumption}[Extended treatment bridge function]
    \label{ass:extended_treatment_bridge_function}
    There exists an extended treatment bridge function $q(z,z',a,x)$, solving the integral equation
    \begin{align}
        \label{eq:extended_treatment_bridge}
        \sum_{z'} q(z,z',a,x) \, p(z' \, | \, w,a,x) = \frac{p(z \, | \, w,x)}{p(a \, | \, w,x)},
    \end{align}
    in particular requiring positivity \cref{ass:proxy_positivity}. 
\end{assumption}

When comparing integral equation \eqref{eq:extended_outcome_bridge} with the standard outcome bridge equation \eqref{eq:outcome_bridge_function}, we notice two major differences. First, equation \eqref{eq:extended_outcome_bridge} gives an integral representation of $p(y,w \,| \, z,a,x)$, adding dependence on $W$ to the right-hand side compared to the standard outcome bridge equation \eqref{eq:outcome_bridge_function}. Second, the extended outcome bridge function $h$ subsequently also depends on an added parameter $w$ that is not integrated over. Similarly, equation \eqref{eq:extended_treatment_bridge} is an extension of the standard treatment bridge equation \eqref{eq:treatment_bridge_function}, adding explicit dependence on $Z$ to both the right-hand side and the bridge function $q$. While equation \eqref{eq:treatment_bridge_function} gives an integral representation of the inverse propensity score, the extended bridge equation \eqref{eq:extended_treatment_bridge} introduces an additional factor of $p(z \, | \, w,x)$ to the right-hand side. For both \eqref{eq:extended_outcome_bridge} and \eqref{eq:extended_treatment_bridge}, marginalizing over $W$ or $Z$, respectively, recovers the standard bridge equations from \eqref{eq:outcome_bridge_function} and \eqref{eq:treatment_bridge_function}. In particular, if $h$ and $q$ are solutions to the extended bridge equations, then it is easy to see that
\begin{align}
    \label{eq:extension_VS_standard}
    \tilde{h}(y,w,a,x) = \sum_{w'} h(y,w',w,a,x) \quad \text{and} \quad 
    \tilde{q}(z,a,x) = \sum_{z'} q(z',z,a,x)
\end{align}
will be valid solutions to the standard bridge equations. Accordingly, it is appropriate to call these bridge functions \emph{extended} relative to standard outcome and treatment bridge functions.
 
We construct solutions to these extended bridge equations in the case of discrete random variables. In the discrete case, \eqref{eq:extended_outcome_bridge} and \eqref{eq:extended_treatment_bridge} are matrix equations that admit unique solutions if we assume the invertibility of $p(W \, | \, Z,a,x)$ and $p(Z \, | \, W,a,x)$. Using similar notation as \citet{Miao2018}, the solutions can be written as
\begin{align}
    \label{eq:discrete_extended_outcome_bridge}
    h(y,w,w',a,x) &= \sum_{z'}  p(y,w \, |\, z',a,x) (p^{-1}(W \, | \, Z,a,x))_{z',w'}, \\
    \label{eq:discrete_extended_treatment_bridge}
    q(z,z',a,x) &= \sum_{w'} \frac{p(z \, | \, w',x)}{p(a \, | \, w',x)} (p^{-1}(Z \, | \, W,a,x))_{w',z'},
\end{align} 
where $(p^{-1}(W \, | \, Z,a,x))_{z',w'}$ is the $(z',w')$ entry of the inverse matrix of $p(W \, | \, Z,a,x)$, and similarly for $(p^{-1}(Z \, | \, W,a,x))_{w',z'}$. With a slight abuse of notation, we also represent these bridge functions as matrix products $ h(y,w,W,a,x) = p(y,w \, |\, Z,a,x) p^{-1}(W \, | \,Z,a,x)$ and $q(z,Z,a,x) = B(a,z,x,W) p^{-1}(Z \, | \, W,a,x)$ with vector $B$ defined via $B(a,z,x,w) = p(z \, | \, w,x) / p(a \,|\, w,x)$.

Using either the extended outcome bridge function $h$ or the extended treatment bridge function $q$, we can prove identification of $\dodistr{Y,W,Z,X}{}{a}$. Following the presentation style of \citet{Miao2018} and \citet{Cui2023}, we first present our new proximal identification results for discrete random variables using matrix notation. We impose the following assumptions, which are satisfied in our example setting. Notably, these assumptions are strictly stronger than \cref{ass:exclusion_restriction_Z_X,ass:ignorability_given_Z} or \cref{ass:exclusion_restriction_W_X,ass:ignorability_given_W}.
\begin{assumption}[Exclusion restrictions for $W,Z,X$]\hfill 
\label{ass:exclusion_restriction_W_Z_X}
$W(a) = W$, $Z(a) = Z$ and $X(a) = X$ a.s. for all $a \in \mathcal{X}_A$.    
\end{assumption}

\begin{assumption}[Ignorability and positivity given $W,Z,U,X$]
\label{ass:ignorability_given_W_Z}
$Y(a) \indep A \, | \, W,Z,U,X$ for all $a \in \mathcal{X}_A$ and furthermore
for all $a \in \mathcal{X}_A$: $p(a \, |\, W,Z,U,X) > 0$ $P_{(W,Z,U,X)}$-almost surely.   
\end{assumption}

\begin{corollary}[Interventional joint identification I]
    \label{cor:identify_entire_joint_with_outcome_bridge}
    Let $A,Y,W,Z,X$ and $U$ be discrete random variables. If we assume that the matrix $p(W \, |\, Z,a,x)$ is invertible with inverse $p^{-1}(W \, | \, Z,a,x)$, the joint distribution $\dodistr{Y,W,Z,X}{}{a}$ is identifiable under \cref{ass:ignorability_given_W_Z,ass:valid_outcome_inducing_proxy,ass:valid_treatment_inducing_proxy_with_W,ass:exclusion_restriction_W_Z_X,ass:outcome_completeness} with
    \begin{align*}
        \dodistr{y,w,z,x}{}{a} = h(y,w,W,a,x) \, p(W,z,x) \qquad \text{and $h$ as in \eqref{eq:discrete_extended_outcome_bridge}}.
    \end{align*}
    \begin{proof}
        By the contraction property of conditional independence, \cref{ass:valid_outcome_inducing_proxy,ass:valid_treatment_inducing_proxy_with_W,ass:ignorability_given_W_Z} imply 
        \begin{alignat}{2}
            Z &\indep (Y,W)  | \, A,U,X   &&\text{by \ref{ass:valid_outcome_inducing_proxy} and \ref{ass:valid_treatment_inducing_proxy_with_W},} 
            \label{eq:valid_treatment_inducing_proxy_as_Y_prime}\\
            A  &\indep (Y(a),W) | \, Z,U,X \qquad &&\text{by \ref{ass:valid_outcome_inducing_proxy} and \ref{ass:ignorability_given_W_Z}.} 
            \label{eq:ignorability_for_Y_prime}
        \end{alignat}
        Using the matrix notation outlined in \cref{sec:notation}, we derive two equivalent expressions for $p(y,w \,| \, Z,a,x)$ similar to the proof of \cref{prop:Proximal_backdoor_formula}.
        \begin{align*}
            p(y,w \, |\, U,a,x) p(U \, | \, Z,a,x) &=p(y,w \, |\, Z,a,x) && \text{(by \ref{eq:valid_treatment_inducing_proxy_as_Y_prime})}\\
            &= p(y,w \, |\, Z,a,x)  \, p^{-1}(W \, |\, Z,a,x) \, p(W \, |\, Z,a,x) \\
            &= p(y,w \, |\, Z,a,x)  p^{-1}(W \, |\, Z,a,x) \, p(W \, |\, U,x) p(U \, | \,Z,a,x). && \text{(by \ref{ass:valid_outcome_inducing_proxy})}
        \end{align*}
        With \cref{ass:outcome_completeness} and \cref{cor:discrete_completeness_matrix_rank}, we conclude that $p(U \, |\, Z,a,x)$ has full row-rank, and therefore 
        \begin{align}
            \label{eq:matrix_bridge_equation_with_W}
            p(y,w \, |\, U,a,x) = h(y,w,W,a,x) \, p(W \, |\, U,x),
        \end{align}
        for $h(y,w,W,a,x) = p(y,w \, |\, Z,a,x)  p^{-1}(W \, |\, Z,a,x)$.
        Finally, we can use \eqref{eq:ignorability_for_Y_prime}, \ref{ass:exclusion_restriction_W_Z_X}, the positivity \cref{ass:ignorability_given_W_Z}, and consistency to conclude
        \begin{align*}
            \dodistr{y,w,z,x}{}{a} &= p(y,w  \, |\, U,a,x) \, p(U,z,x) && \text{(by \ref{eq:ignorability_for_Y_prime} and \ref{eq:valid_treatment_inducing_proxy_as_Y_prime})}\\
                          &= h(y,w,W,a,x) \, p(W \, |\, U,x) \, p(U,z,x) && \text{(by \ref{eq:matrix_bridge_equation_with_W})} \\
                          &= h(y,w,W,a,x) \, p(W, z,x). && \text{(by \ref{ass:valid_outcome_inducing_proxy})}
        \end{align*}        
        \end{proof}
\end{corollary}
In the general case of non-discrete random variables, we need to impose the existence of a solution to the integral equation \eqref{eq:extended_outcome_bridge} as an additional assumption. The previous proof can then be slightly altered to allow for non-discrete random variables (see \cref{sec:extended_bridge_functions_for_general_state_spaces}). However, we do not provide general conditions for the existence of such a solution here.\footnote{We suspect that sufficient conditions can be stated similarly to the regularity conditions \ref{auxass:outcome_bridge} and \ref{auxass:treatment_bridge} used for standard outcome and treatment bridge functions.}

Comparing the proofs of \cref{cor:identify_entire_joint_with_outcome_bridge} and \cref{prop:Proximal_backdoor_formula}, we can see that \cref{cor:identify_entire_joint_with_outcome_bridge} indirectly defines a new outcome $Y'=(Y,W)$ and then applies a discrete version of the identification result \cref{prop:proximal_backdoor_extended}. In this sense, equations \eqref{eq:ignorability_for_Y_prime} and \eqref{eq:valid_treatment_inducing_proxy_as_Y_prime} are restating \cref{ass:ignorability_given_Z,ass:valid_treatment_inducing_proxy} for the new outcome variable $Y'$. Using the same assumptions, we can also extend \cref{prop:proximal_IPW_formula_extended} to recover the full interventional joint in the discrete case using an extended treatment bridge function. As before, we first discuss the discrete case.

\begin{corollary}[Interventional joint identification II]
    \label{cor:identify_entire_joint_with_treatment_bridge}
    Let $A,Y,W,Z,X$ and $U$ be discrete random variables. If we assume that the matrix $p(Z \, |\, W,a,x)$ is invertible with inverse $p^{-1}(Z \, |\, W,a,x)$, the joint distribution $\dodistr{Y,W,Z,X}{}{a}$ is identifiable under \cref{ass:ignorability_given_W_Z,ass:valid_outcome_inducing_proxy,ass:valid_treatment_inducing_proxy_with_W,ass:exclusion_restriction_W_Z_X,ass:proxy_positivity,ass:treatment_completeness} via 
    \begin{align*}
        \dodistr{y,w,z,x}{}{a} = q(z,Z,a,x) \, p(Z,y,a,w,x) \qquad \text{and $q$ as in \eqref{eq:discrete_extended_treatment_bridge}}.
    \end{align*}
    \begin{proof}
        By Bayes theorem and positivity \cref{ass:ignorability_given_W_Z,ass:proxy_positivity}, we conclude
        \begin{align*}
            \sum_{u} \frac{p(z \, | \, u,x)}{p(a \, |\, u,x)} p(u \, | \, w,a,x) &= \sum_{u} \frac{p(z \, | \, u,x)}{p(a \, |\, u,x)} \frac{p(a \, | \, w,u,x) \, p(u \, | \, w,x)}{p(a \, | \, w,x)} && \text{(Bayes)} \\
            &= \sum_{u} \frac{p(z \, | \, w, u,x)}{p(a \, |\, w,u,x)} \frac{p(a \, | \, w,u,x) \, p(u \, | \, w,x)}{p(a \, | \, w,x)}&& \text{(by \ref{ass:valid_outcome_inducing_proxy})} \\
            &= \frac{p(z \, | \, w,x)}{p(a \, | \, w,x)}.
        \end{align*}
        Now we return to the matrix notation and define the vectors
        \begin{align*}
            A(a,z,x,U) = \left(\frac{p(z \, | \, u_i,x)}{p(a \, | \, u_i,x)} \right)_{i=1}^{n_U} \quad ; \quad B(a,z,x,W) =  \left(\frac{p(z \, | \, w_i,x)}{p(a \, | \, w_i,x)} \right)_{i=1}^{n_W},
        \end{align*}
        where $\mathcal{X}_U = \{u_1, \dots, u_{n_U}\}$ and $\mathcal{X}_W = \{w_1, \dots, w_{n_W}\}$. In this notation,  $A \, p(U \,| \, W,a,x) = B$. With \cref{ass:valid_outcome_inducing_proxy}, we have
        \begin{align*}
           A(a,z,x,U) \, p(U \,| \, W,a,x) = B(a,z,x,W) &= B(a,z,x,W)  \,p^{-1}(Z \, | \, W,a,x) \, p(Z \, | \, W,a,x) \\
           &= B(a,z,x,W)  \,p^{-1}(Z \, | \, W,a,x) \, p(Z \, | \, U,a,x) \, p(U \,| \, W,a,x).
        \end{align*}
        With \cref{ass:treatment_completeness} and \cref{cor:discrete_completeness_matrix_rank}, we conclude that $p(U \, |\, W,a,x)$ has full row-rank, and therefore
        \begin{align}
            \label{eq:matrix_bridge_equation_with_Z}
            A(a,z,x,U)= B(a,z,x,W) \,p^{-1}(Z \, | \, W,a,x) \, p(Z \, | \, U,a,x) = q(z,Z,a,x) \, p(Z \, | \, U,a,x),
        \end{align}
        for $q(z,Z,a,x) = B(a,z,x,W) p^{-1}(Z \, | \, W,a,x)$. Finally, we can use \cref{ass:ignorability_given_W_Z,ass:exclusion_restriction_W_Z_X} and consistency to conclude
        \begin{align*}
            \dodistr{y,w,z,x}{}{a} &= \sum_{u} p(y \, | \, a,w,z,u,x) \, p(w,z,u,x) \\
            &= \sum_{u} p(y \, | \, a,w,u,x) \,p(w,u,x)\, p(z \,| u,x) && \text{(by \ref{ass:valid_outcome_inducing_proxy} and \ref{ass:valid_treatment_inducing_proxy_with_W})} \\
            &= \sum_{u} p(y,a,w,u,x) \,\frac{p(z \, | \, u,x)}{p(a \, |\, u,x)} = A(a,z,x,U) p(U,y,a,w,x)  && \text{(by \ref{ass:valid_outcome_inducing_proxy})}\\
            &= q(z,Z,a,x) \, p(Z \, | \, U,a,x) p(U,y,a,w,x) && \text{(by \ref{eq:matrix_bridge_equation_with_Z})} \\
            &=  q(z,Z,a,x) \, p(Z \, | \, U,a,w,x) p(U,y,a,w,x) && \text{(by \ref{ass:valid_outcome_inducing_proxy})}  \\
            &= q(z,Z,a,x) \, p(Z,y,a,w,x). && \text{(by \ref{ass:valid_treatment_inducing_proxy_with_W})} 
        \end{align*}
        \end{proof}
\end{corollary}

In the general setting, one must again additionally add the existence of a solution to the integral equation \eqref{eq:extended_treatment_bridge} as an assumption, leading to a similar proof with altered notation (see \cref{sec:extended_bridge_functions_for_general_state_spaces}). As before, we do not provide conditions for the existence of these solutions at this point.

We can recover analogs of the identifying functionals in \cref{prop:proximal_backdoor_extended,prop:proximal_IPW_formula_extended} from the extended identifying functionals in \cref{cor:identify_entire_joint_with_outcome_bridge,cor:identify_entire_joint_with_treatment_bridge} by marginalization and using \eqref{eq:extension_VS_standard}. In this sense, we can use extended bridge functions to generalize the previous results. Notably, we do require stronger assumptions for this, most notably the existence of a solution to the extended bridge equations. Accordingly, the application of extended bridge functions does not replace the previous methods but should instead be regarded as an additional tool for proximal identification. There are many possible applications for extended bridge functions that can be explored. We discuss some of these applications in \cref{sec:summary}.

\section{Proximal identification theory}
\label{sec:proximal_identification_theory}
In the following, we apply the previous results to a broader class of causal graphs. In particular, we explore which identification results can be obtained by combining the identification results using outcome bridge functions, treatment bridge functions, and extended bridge functions from \cref{sec:outcome_and_treatment_bridge_functions} with standard identification theory. \citet{Shpitser2021} presented a Proximal ID Algorithm that specifically uses \textit{outcome} bridge functions together with fixing operations. We explore how adding \textit{treatment} bridge functions as well as extended bridge functions can expand on these results. Finally, we introduce a generalized framework for proximal ID algorithms in \cref{thm:generalized_proximal_ID_algorithm}. We begin by presenting a motivating example that illustrates the use of treatment bridge functions in conjunction with the standard g-formula in a front-door setting. This yields a new proximal front-door identification strategy that complements the proximal front-door criterion introduced in \citet{Shpitser2021}. 

\subsection{Proximal front-door criterion}
\label{Subsec:Proximal_front_door}
We return to the example medical study introduced in \cref{sec:Introduction} and additionally measure an inflammatory acute-phase parameter $M$ of a patient, for example, the C-reactive protein (CRP) \citep{Sproston2018}. We assume that $M$ is a perfect measurement of the inflammatory process that is part of an infection. Our causal assumption is that the effect of the influenza vaccination $A$ on hospitalization $Y$ is through suppressing an infection and the inflammatory process. Thus, $M$ is at least partially mediating the causal effect of $A$ on $Y$ through an $A \rightarrow M \rightarrow Y$ pathway.  

In our example setting, traumatic injuries also routinely lead to increased CRP values \citep{Sproston2018}, forcing us to include a $W \rightarrow M$ pathway. Furthermore, chronic inflammatory processes and fever presenting with raised CRP values may cause a general practitioner to perform a routine cancer screening on a patient.\footnote{ We acknowledge that the connection between inflammation, cancer, and cancer screenings is much more complex. For the sake of argument, we only consider $M \rightarrow Z$.} Thus, we also want to include a $M \rightarrow Z$ pathway. Critically, we assume that $U$ does not have a direct effect on the mediator $M$. This naturally leads us to consider the proximal front-door causal graph (a) in \cref{fig:proximal_front_door_graph}. Compared to the setting without a mediator in \cref{fig:standard_proximal_graph}, we added the mediator and its associated pathways and changed the direction of the causal relation between $Z$ and $A$ to keep the graph acyclic. This makes $Z$ a post-treatment proxy of $U$. We refer to \citet{TchetgenTchetgen2024} for a discussion of post- and pre-treatment proxies in proximal causal inference and to \citet{Ringlein2025} for practical examples. 

\begin{figure}[h!]
        \centering
        \includegraphics[width=0.9\linewidth]{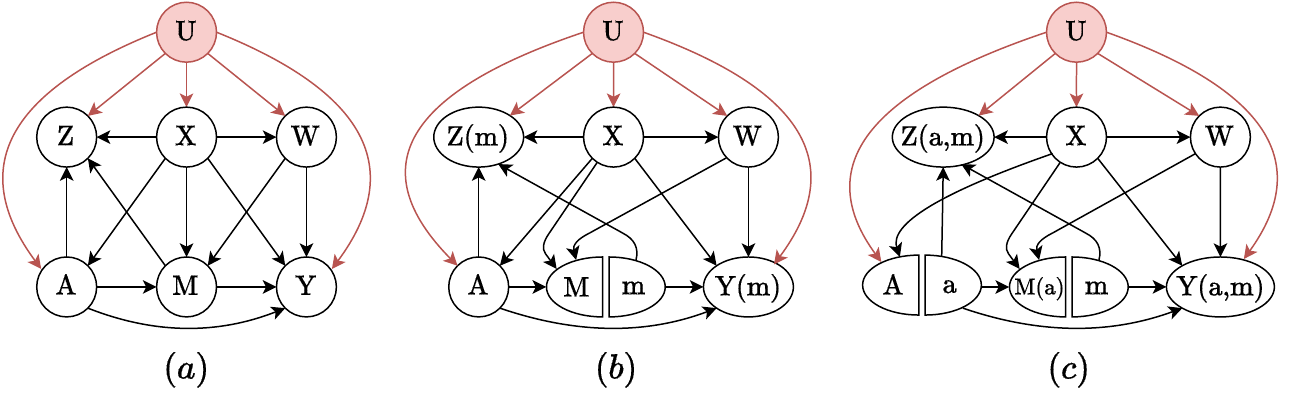}
        \caption{\textbf{(a)} A causal diagram representing the model used for proximal front-door identification with treatment bridge functions. We prove an identification formula for $\dodistr{y}{}{a}$ in \cref{thm:proximal_front_door_treatment_bridge}.  \textbf{(b)} A SWIG representing the causal model in $(a)$ after the intervention $\text{do}(M = m)$. \textbf{(c)} A SWIG representing the causal model in $(a)$ after the joint intervention $\text{do}(A=a, M = m)$.}
        \label{fig:proximal_front_door_graph}
\end{figure}
The identification strategies presented in \cref{sec:outcome_and_treatment_bridge_functions} are not directly applicable to this extended framework. In particular, the $W \rightarrow M \rightarrow Z$ path renders the required conditional independence $W \indep (Z,A) \, | \, U,X$ from \cref{ass:valid_outcome_inducing_proxy} invalid. To address this problem, we first perform an intervention $\text{do}(M = m)$ and deduce the corresponding interventional distribution $\dodistr{y,a,w,z,x}{}{m}$ using standard identification arguments. This fixes $M$ at a constant value and blocks the $W \rightarrow M \rightarrow Z$ path. We then require the corresponding assumptions from \cref{sec:outcome_and_treatment_bridge_functions} to hold for the kernel $\dodistr{Y,A,W,Z,X}{}{m}$ after the intervention $\text{do}(M=m)$. A similar calculation as in \cref{prop:Proximal_IPW_formula} allows us to deduce the joint interventional distribution $\dodistr{Y,W,X}{}{a,m}$ after the combined intervention $\text{do}(M=m, A=a)$. Lastly, suitable exclusion restrictions enable identification of $\dodistr{Y}{}{a}$ from the joint interventional kernel $\dodistr{Y,W,X}{}{a,m}$. We require the following assumptions that can be deduced from the SWIGs presented in \mbox{\cref{fig:proximal_front_door_graph}(b) and (c)}. Analogous to \cite{Shpitser2021}, we will not explicitly elaborate on positivity assumptions going forward.
\begin{assumption}
\label{ass:kernel_district_decomposition}
$M(a) \indep (Y(a,m), A) \, | \, W,X$ for all $m \in \mathcal{X}_M$ and $a \in \mathcal{X}_A$.
\end{assumption}

\begin{assumption}[Exclusion restrictions]
\label{ass:kernel_exclusion_restrictions_M}
$A(m) = A$, $W(m,a) = W$, $X(m,a) = X$, $U(m) = U$ almost surely for all $m \in \mathcal{X}_M$ and $a \in \mathcal{X}_A$.
\end{assumption}

\begin{assumption}[Ignorability for $M$]
\label{ass:kernel_ignorability_M}
$(Y(m), Z(m)) \indep M \,|\, A,W,X$ for all $m \in \mathcal{X}_M$.
\end{assumption}

The following assumptions are restatements of the assumptions of \cref{prop:proximal_IPW_formula_extended} that allow us to apply identification results using treatment bridge functions to the kernel $\dodistr{y,a,w,z,x}{}{m}$. 
\begin{assumption}[Kernel latent ignorability]
\label{ass:kernel_latent_ignorability}
$Y(a,m) \indep A \, | \, W,U,X$ for all $a \in \mathcal{X}_A, m \in \mathcal{X}_M$.    
\end{assumption}
\begin{assumption}[Kernel valid outcome-inducing proxy]
\label{ass:kernel_valid_outcome_inducing_proxy}
$W \indep (Z(m),A) \, | \, U,X$.
\end{assumption}
\begin{assumption}[Kernel valid treatment-inducing proxy]
\label{ass:kernel_valid_treatment_inducing_proxy}
$Y(m) \indep Z(m) \, | \, W,A,U,X.$
\end{assumption}

\begin{assumption}[Kernel completeness]
\label{ass:kernel_treatment_completeness}
For any $g \in L^2(U)$ and all $a \in \mathcal{X}_A,m \in \mathcal{X}_M$ and $x\in \mathcal{X}_X $
\begin{align*}
    \mathbb{E}_{\dodistr{\cdot}{}{m}}[g(U) \, | \, W,A=a,X=x] = 0 \text{ almost surely}  \ \Longleftrightarrow \ g(U) = 0 \text{ almost surely}.
\end{align*}
\end{assumption}

\begin{assumption}[Kernel treatment bridge function]
    \label{ass:kernel_treatment_bridge_function}
    There exists a treatment bridge function $q^{(m)}(z,a,x)$, solving the integral equation
    \begin{align}
        \label{eq:kernel_treatment_bridge_function}
        \sum_{z} q^{(m)}(z,a,x) \, \dodistr{z}{a,w,x}{m} = \frac{1}{\dodistr{a}{w,x}{m}},
    \end{align}
    in particular requiring positivity of $\dodistr{a}{w,x}{m}$.
\end{assumption}

\begin{theorem}[Proximal front-door formula with treatment bridge function] 
\label{thm:proximal_front_door_treatment_bridge}
Under the \cref{ass:kernel_district_decomposition,ass:kernel_exclusion_restrictions_M,ass:kernel_ignorability_M,ass:kernel_latent_ignorability,ass:kernel_treatment_bridge_function,ass:kernel_treatment_completeness,ass:kernel_valid_outcome_inducing_proxy,ass:kernel_valid_treatment_inducing_proxy}, we can identify the distribution of $Y(a)$ as
    \begin{align}
        \dodistr{y}{}{a} =   \sum_{m,x,z}  q^{(m)}(z,a,x) \, p(y,m,z,a,x).
    \end{align}

    \begin{proof}
        We start by decomposing the causal query similar to the standard front-door criterion. By consistency, positivity, and \cref{ass:kernel_exclusion_restrictions_M,ass:kernel_district_decomposition}, we obtain
        \begin{align}
            \dodistr{y}{}{a} = &\sum_{m,w,x} \dodistr{y,m}{w,x}{a} p(w,x) && \text{(by \ref{ass:kernel_exclusion_restrictions_M})}\nonumber\\
            = &\sum_{m,w,x} \dodistr{y}{w,x}{a,m} \dodistr{m}{w,x}{a} p(w,x). && \text{(by \ref{ass:kernel_district_decomposition} and \ref{ass:kernel_exclusion_restrictions_M})} 
            \label{eq:prox_front-door_treatment_bridge_result_1}
            \end{align}
        It remains to identify the two kernels $\dodistr{m}{w,x}{a}$ and $\dodistr{y,w,x}{}{a,m}$. With \cref{ass:kernel_district_decomposition}, we can directly conclude $\dodistr{m}{w,x}{a} = p( m \, | \, a,w,x)$. In order to identify $\dodistr{y,w,x}{}{a,m}$, we proceed in two steps. First, we identify the kernel $\dodistr{y,a,w,z,x}{}{m}$ starting from the observational distribution. By \cref{ass:kernel_exclusion_restrictions_M,ass:kernel_ignorability_M} and consistency, we have
        \begin{align}
              \label{eq:m_intervention_distribution_factorization}
              \dodistr{y,a,w,z,x}{}{m} = p(y,z \, | \, m,a,w,x) p(a,w,x).
        \end{align}
        Now we can use $\dodistr{y,a,w,z,x}{}{m}$ to identify $\dodistr{y,w,x}{}{a,m}$ with a treatment bridge method similar to \cref{prop:proximal_IPW_formula_extended}. We can go through the same derivation as in \cref{prop:Proximal_IPW_formula} and use \cref{ass:kernel_valid_outcome_inducing_proxy,ass:kernel_treatment_completeness,ass:kernel_treatment_bridge_function} to show
        \begin{align*}
            \frac{1}{\dodistr{a}{u,x}{m}} = \sum_z q^{(m)}(z,a,x)  \, \dodistr{z}{u,a,x}{m}. 
        \end{align*}

        Combining this result with consistency, ignorability \cref{ass:kernel_latent_ignorability}, and \cref{ass:kernel_valid_outcome_inducing_proxy,ass:kernel_valid_treatment_inducing_proxy}, we can follow the proof of \cref{prop:proximal_IPW_formula_extended} to arrive at
        \begin{align*}
            \dodistr{y,w,x}{}{a,m} = \sum_z q^{(m)}(z,a,x) \, \dodistr{y,w,z,a,x}{}{m}.
        \end{align*}
        Thus, we have successfully identified the kernel $\dodistr{y,w,x}{}{a,m}$. We trivially condition the identified kernels on $W$ and $X$ to obtain an identifying functional for $\dodistr{y}{w,x}{a,m}$. Finally, we substitute our results for $\dodistr{y,a,z}{w,x}{m}$ and $\dodistr{y,w,x}{}{a,m}$ into     \eqref{eq:prox_front-door_treatment_bridge_result_1} and conclude
        \begin{align*}
            \dodistr{y}{}{a} &= 
            \sum_{m,w,x}  \left( \sum_z q^{(m)}(z,a,x) \dodistr{y,z,a}{w,x}{m} \right)p( m \, | \, a,w,x) \, p(w,x) \\
            &= \sum_{m,w,z,x} \left(q^{(m)}(z,a,x) \ p(y,z \, | \, m,a,w,x)p(a \, | \,w,x) \right)  \ p( m \, | \, a,w,x)\, p(w,x)  \quad \text{(by \eqref{eq:m_intervention_distribution_factorization})} \\
            &= \sum_{m,x,z}  q^{(m)}(z,a,x) \, p(y,m,z,a, x).
            \end{align*}
    \end{proof}
\end{theorem}

From the proof, we observe that initially conditioning on the outcome proxy $W$ was a crucial step in splitting the identification problem into smaller subproblems in equation \eqref{eq:prox_front-door_treatment_bridge_result_1}. This allowed us to first identify the kernels $\dodistr{y}{w,x}{a,m}$ and $\dodistr{m}{w,x}{a}$ separately and later combine them to finally identify $\dodistr{y}{}{a}$. When comparing the conditional independence from \cref{ass:kernel_district_decomposition} to the SWIG in \mbox{\cref{fig:proximal_front_door_graph}(c)}, we can see that a similar splitting would not be possible without conditioning on $W$. This observation illustrates that, even when the primary objective is to identify the interventional kernel $\dodistr{y}{}{a}$, which does not explicitly involve the proxy variables $W$ or $Z$, it can still be advantageous to identify interventional distributions that do include $W$ or $Z$. This hints at the relevance of the results in  \cref{prop:proximal_backdoor_extended,prop:proximal_IPW_formula_extended,cor:identify_entire_joint_with_outcome_bridge,,cor:identify_entire_joint_with_treatment_bridge} for a general proximal identification theory since these methods allow us to identify joint interventional distributions containing some or all of the proxy variables.

Several different proximal front-door settings have been considered in the literature. \citet{Shpitser2021} also consider a proximal front-door graph containing a $Z \rightarrow M \rightarrow W$ pathway instead (see \mbox{\cref{fig:proximal_front_door_graph_comparison}(b)}) and provide an identification formula using an outcome bridge function. This identification result does not apply in our setting since the core assumption $Y(a,m) \indep M(a) \, | \, Z,X$ is violated. As we discuss in \cref{sec:applying_the_generalized_proximal_ID_algorithm}, the graph considered in this section can also not be identified using the original proximal ID algorithm from \citet{Shpitser2021}. \citet{Dukes2023} presented an identification formula in a different proximal front-door setting where a confounding effect on the mediator $M$ via $U\rightarrow M$ is considered, assuming no direct effects between the proxies $Z$ and $W$ and the mediator $M$ (see \mbox{\cref{fig:proximal_front_door_graph_comparison}(c)}). The setting is also expanded in a different direction by \citet{Ghassami2024} to unmeasured mediators, assuming the existence of $U$-unconfounded proxies $W$ and $Y$ (see \mbox{\cref{fig:proximal_front_door_graph_comparison}(d)}). We refer to Section 7 of \citet{Ghassami2024} for a comparison of the different front-door models and the corresponding identification strategies. 
 \begin{figure}[h!]
        \centering
        \includegraphics[width=1.0\linewidth]{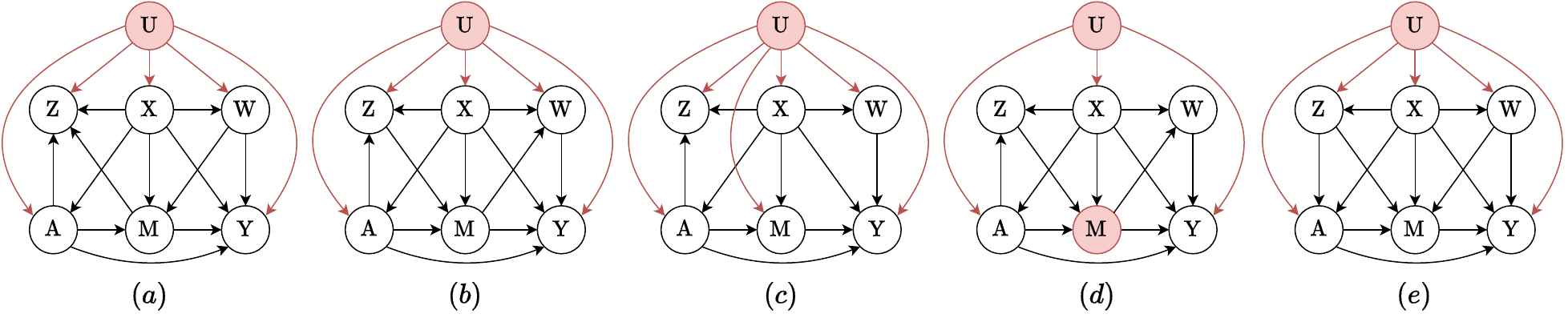}
        \caption{Comparison of different proximal front-door graphs: \textbf{(a)} A proximal front-door graph that is consistent with the assumptions of \cref{thm:proximal_front_door_treatment_bridge}. In this graph, $\dodistr{y}{}{a}$ cannot be identified using the methods from \citet{Shpitser2021}. We present an identification result of $\dodistr{y}{}{a}$ using a treatment bridge function in \cref{thm:proximal_front_door_treatment_bridge}. \textbf{(b)} A proximal front-door graph for which $\dodistr{y}{}{a}$ was shown to be identifiable using an outcome bridge function in \citet{Shpitser2021}. \textbf{(c)} A causal diagram consistent with the assumptions of the model in \citet{Dukes2023}.
        \textbf{(d)} A causal diagram consistent with the assumptions of the model in \citet{Ghassami2024}. \textbf{(e)} A proximal front-door graph that we consider in \cref{sec:applying_the_generalized_proximal_ID_algorithm} and for which we show identification of $\dodistr{y}{}{a}$ via extended bridge functions.} 
        \label{fig:proximal_front_door_graph_comparison}
\end{figure}

\subsection{Proximal identification algorithms}
\label{sec:Identification_algorithms}
In the previous section, we demonstrated that combining conventional identification strategies, such as the front-door criterion, with treatment bridge functions can lead to additional identification results. In this section, we present a generalized framework for proximal identification strategies using a new formulation based on kernel operations. This generalizes the proximal ID algorithm from \citet{Shpitser2021} and allows for identification strategies using a combination of proximal identification methods. 

While much of our previous discussion was guided by the example medical study, we now return to the general setting of nonparametric identification. We formulate the proximal identification problem similar to \cref{prob:nonparametric_identification_problem}. Importantly, we also have to include non-graphical assumptions, such as completeness or solvability of the bridge equations. Given a fixed observational distribution $p(V)$, the results in \cref{sec:outcome_and_treatment_bridge_functions} show that suitably chosen non-graphical assumptions can substantially restrict the class of causal models compatible with $p(V)$. In particular, these assumptions reduce the original set of causal models that yield $p(V)$ as their observational distribution down to a subset of models that not only agree on $p(V)$ but also induce the same interventional distribution. Under this restriction, the interventional distribution is then uniquely identified from $p(V)$.
\begin{boxA}
\begin{problem}[Nonparametric proximal identification problem] 
\label{prob:nonparametric_proximal_identification_problem}
\par
{\leftskip=0.em
Consider a causal model with observed variables $V$ and latent variables $U$.
Assume that the model induces the causal graph $\mathcal{G}(V \cup U)$ and obeys a given set of non-graphical assumptions required for proximal causal inference.
Fix an intervention set $A \subseteq V$ and an outcome set $Y \subseteq V \setminus A$.
Under these assumptions, are the interventional distributions $\dodistr{Y}{}{a}$
uniquely determined by the observational distribution $p(V)$?
\par}
\end{problem}
\end{boxA}

In the following, we consider a general causal graph $\mathcal{G}(V \cup U)$ with observed variables $V$ and unobserved variables $U$. We fix two disjoint sets $A,Y \subseteq V$ and try to identify $\dodistr{Y}{}{a}$ for $a \in \mathcal{X}_A$ using an identification algorithm that includes proximal causal inference methods. We break down the identification algorithm into two building blocks. 

\subsubsection{District factorization}
\label{sec:district_factorization}
First, we define the input and target of the identification algorithm. Every identification strategy must start with the joint observational distribution $p(V)$ as input. The fundamental objects of any identification strategy are interventional kernels of the general form $\dodistr{A}{B}{C}$.\footnote{We note that any interventional kernel becomes meaningful only in the context of the causal model from which it was derived. In this sense, the fundamental objects are kernels combined with the underlying causal model.} Starting from the observational kernel $p(V)$, we iteratively identify new kernels to ultimately reach $\dodistr{Y}{}{A}$. 

The identification problem is greatly simplified using \textit{district factorization}. Instead of directly targeting $\dodistr{Y}{}{a}$, we try to identify multiple intermediate kernels. First, we apply a latent projection to $\mathcal{G}(V \cup U)$ and project out a set $H \supseteq U$, removing the unmeasured variables $U$ from the graph and potentially some measured variables $V \cap H$ as well. This yields the ADMG $\mathcal{G}(V \setminus H)$ with $V \setminus H$ as vertex set. We recall ADMGs and latent projections in \cref{sec:ADMGs_and_latent_projections}. 

We can further reduce the set of random variables in our graph by only considering those random variables that causally affect $Y$ after intervening on $A$. Let $Y^*$ be the set of ancestors of $Y$ in $\mathcal{G}(V \setminus H)$ via directed paths that do not intersect $A$. We define the subgraph $\mathcal{G}(V \setminus H)_{Y^*}$ on the vertex set $Y^*$. The districts of the ADMG $\mathcal{G}(V \setminus H)_{Y^*}$ allow for the following factorization.
\begin{lemma}[District factorization \citep{Tian2002}] 
    \label{lem:district_factorization}
    For a district $D \in \mathcal{D}(\mathcal{G}(V \setminus H)_{Y^*})$ let $s_D$ be a value assignment to $\pa(D) \setminus D$ consistent with the intervention $A=a$. Then the interventional distribution factorizes as
    \begin{align}
        \dodistr{Y}{}{a} = \sum_{Y^* \setminus Y} \prod_{D \in \mathcal{D}(\mathcal{G}(V \setminus H)_{Y^*})} \dodistr{D}{}{s_D}.
    \end{align}
    \begin{proof}
        We refer to \citet{Tian2002} for a proof and discussion of the result.
    \end{proof}
\end{lemma}

District factorization shows that the identification of each $\dodistr{D}{}{s_D}$ is a sufficient condition for the identification of $\dodistr{Y}{}{a}$. The ID algorithms described in \citet{Shpitser2006} and \citet{Shpitser2021} implement this task by targeting the stronger kernel $\dodistr{D}{}{V^* \setminus D}$,
where $V^* = V \setminus H$. In the NPSEM framework, this is licensed by the exclusion restriction $D(s_D) = D(t_{V^* \setminus D})$ for any value assignment $t_{V^* \setminus D}$ on $V^* \setminus D$ consistent with the intervention $A = a$ and the intervention $s_D$ on $\pa(D) \setminus D$. Identifying the larger kernels $\dodistr{D}{}{V^* \setminus D}$ is therefore also a sufficient condition for identification of $\dodistr{Y}{}{a}$. This condition was also shown to be necessary for identification in \cref{prob:nonparametric_identification_problem} \citep{Shpitser2006,Huang2006}. All identification algorithms that we discuss below target $\dodistr{D}{}{V^* \setminus D}$ instead of $\dodistr{Y}{}{a}$ and we will refer to the $\dodistr{D}{}{V^* \setminus D}$ kernels as \textit{target kernels}. 

Notably, we deliberately treat the variables in the intersection $H\cap V$  as if they were unobserved, and consequently eliminate them from the graph via latent projection. Since the proximal causal inference methods from \cref{prop:proximal_backdoor_extended} and \cref{prop:proximal_IPW_formula_extended} do not identify the full interventional distribution, we may choose to exclude certain observed proxies from appearing in the district factorization. To achieve this, we have to additionally project out these observed proxies in the latent projection step. We continue the discussion of this approach in \cref{sec:generalized_proximal_ID_algorithms}.

\subsubsection{Kernel operations}
\label{sec:Kernel_operations}
In order to identify all target kernels $\dodistr{D}{}{V^* \setminus D}$ starting from the observed joint distribution $p(V)$, we have to define kernel operations that act on a set of kernels and, under certain assumptions, output a different kernel. In this sense, kernel operations are the basic ``moves'' of any identification algorithm. 

We recall the definition of a conditional ADMG (CADMG) in \cref{sec:CADMGs_and_fixing} and associate to each kernel of the form $\dodistr{R}{}{S}$ the corresponding CADMG $\mathcal{G}(R,S)$ deduced from the causal graph $\mathcal{G}(V \cup U)$. We construct $\mathcal{G}(R,S)$ by latent projecting out all variables not contained in $R \cup S$ and removing all incoming edges to $S$, as detailed in \cref{sec:CADMGs_and_fixing}.

The most basic kernel operation, \texttt{Fix}, is a kernel generalization of both the backdoor formula \cref{eq:backdoor_formula} and marginalization. The necessary condition for \texttt{Fix} is a simple graphical criterion on the corresponding CADMG that we recall in \cref{sec:CADMGs_and_fixing}. If the condition applies to a variable $B$ in $\mathcal{G}(R,S)$ we call $B$ \textit{fixable}.

\begin{kernelbox}{Fixing  (\texttt{Fix}(B))}{}
\begin{tabular}{@{}ll}
\textbf{Input:} & $\dodistr{R}{}{S}$\\
\textbf{Output:} & $\dodistr{R \setminus \{B\}}{}{S \cup \{B\}}$ \\ 
\end{tabular} 

\vspace{0.5em}
\begin{tabular}{@{}ll}
\textbf{Conditions:}  $B \in R$ and $B$ is fixable in $\mathcal{G}(R,S)$.
\end{tabular} 
\end{kernelbox}

Next, we introduce kernel operations that generalize the proximal identification results from \cref{sec:outcome_and_treatment_bridge_functions} to the kernel setting. We separately define three kernel operations: \texttt{Obf} uses outcome bridge functions as in \cref{prop:proximal_backdoor_extended}, \texttt{Tbf} uses treatment bridge functions as in \cref{prop:proximal_IPW_formula_extended}, and \texttt{Ebf} uses the extended bridge functions from \cref{sec:extended_bridge_functions}. These kernel operations differ in their inputs, outputs, and necessary conditions to be applied.

\begin{kernelbox}{Outcome bridge function PCI method ($\texttt{Obf}_{W,Z}$(B))}{}
\begin{tabular}{@{}ll}
\textbf{Input:} & $\dodistr{R_1}{}{S}$ and $\dodistr{R_2}{}{S}$ \\
\textbf{Output:} & $\dodistr{R_1 \setminus (W \cup Z^* \cup\{B\})}{}{S \cup \{B\}}$ for $Z^* =  \de[\mathcal{G}(R_1 \setminus W,S)](B) \cap Z$
\end{tabular} 

\vspace{0.5em}
\textbf{Conditions:} 
\begin{enumerate}[itemsep=0.em, topsep=0.2em]
    \item $B \in R_1$ and $Z \subseteq R_1$,
    \item either
    $ \quad 
        W \subseteq R_1
        \quad \text{or} \quad
         W \cup (R_1 \setminus O) \subseteq R_2,
    $ \\
    where
    $ O := \de[\mathcal{G}(R_1 \setminus W,S)](B) \setminus (Z \cup \{B\})$,
    \item  there exist  $U^* \subseteq U$ such that \cref{app:ignorability_given_Z,app:kernel_valid_outcome_inducing_proxy,app:kernel_valid_treatment_inducing_proxy,app:outcome_bridge_function,app:outcome_completeness} hold for $R_1, R_2,B,W,Z,U^*$.    

\end{enumerate}
\end{kernelbox}

\begin{kernelbox}{Treatment bridge function PCI method ($\texttt{Tbf}_{W,Z}$(B))}{}
\begin{tabular}{@{}ll}
\textbf{Input:} & $\dodistr{R}{}{S}$ \\
\textbf{Output:} & $\dodistr{R \setminus (Z \cup W^* \cup \{B\})}{}{S \cup \{B\}}$ for $W^* =  \de[\mathcal{G}(R \setminus Z,S)](B) \cap W$
\end{tabular} 

\vspace{0.5em}
\textbf{Conditions:} 
\begin{enumerate}[itemsep=0.em, topsep=0.2em]
    \item $B \in R$ and $W,Z \subseteq R$,
    \item there exist  $U^* \subseteq U$ such that \cref{app:ignorability_given_W,app:kernel_valid_outcome_inducing_proxy,app:treatment_completeness,app:treatment_bridge_function,app:valid_treatment_inducing_proxy_with_W} hold for $R, B,W,Z,U^*$.    
\end{enumerate}
\end{kernelbox}

\begin{kernelbox}{Extended bridge function PCI method ($\texttt{Ebf}_{W,Z}$(B))}{}
\begin{tabular}{@{}ll}
\textbf{Input:} & $\dodistr{R}{}{S}$ \\
\textbf{Output:} & $\dodistr{R \setminus (Z^* \cup W^* \cup \{B\})}{}{S \cup \{B\}}$ \\
&for $W^* =  \de[\mathcal{G}(R,S)](B) \cap W$ and $Z^* =  \de[\mathcal{G}(R,S)](B) \cap Z$
\end{tabular} 

\vspace{0.5em}
\textbf{Conditions:} 
\begin{enumerate}[itemsep=0.em, topsep=0.2em]
    \item $B \in R$ and $W,Z \subseteq R$,
    \item there exist  $U^* \subseteq U$ such that   \cref{app:ignorability_given_W_Z,app:kernel_valid_outcome_inducing_proxy,app:valid_treatment_inducing_proxy_with_W} hold for $R, B,W,Z,U^*$,
    \item either \cref{app:extended_outcome_bridge_function,app:outcome_completeness} or alternatively  \cref{app:extended_treatment_bridge_function,app:treatment_completeness} hold.
\end{enumerate}
\end{kernelbox}

Comparing $\texttt{Obf}$ and \cref{prop:proximal_backdoor_extended}, we notice the additional exclusion of treatment-inducing proxies $Z^*$ from the identified kernel. \texttt{Obf} can use treatment-inducing proxies $Z^*$ that are causally affected by $B$ for setting up the bridge equation but cannot identify the causal effect that $B$ has on $Z^*$ since the this effect is in general confounded by $U^*$. This forces us to exclude $Z^*$ from the output kernel of $\texttt{Obf}$. An analogous exclusion was not necessary in \cref{prop:proximal_backdoor_extended}, since we assumed the exclusion restriction $Z(a) = Z$. Similarly, we can explain the asymmetries between \texttt{Tbf} and \cref{prop:proximal_IPW_formula_extended} and between \texttt{Ebf} and the results in \cref{sec:extended_bridge_functions}. The validity of \texttt{Obf}, \texttt{Tbf}, and \texttt{Ebf} can be shown very similarly to the proofs in \cref{sec:outcome_and_treatment_bridge_functions}. We provide proof sketches in \cref{sec:proof_Obf_validity,sec:proof_Tbf_validity,sec:proof_Ebf_validity}. 

Finally, we define \texttt{Cut} as the marginalization kernel operation that adds a variable $B$ to the intervention set by removing all variables from the kernel that are causally affected by $B$. 
\begin{kernelbox}{Cutting (\texttt{Cut}($B$))}
\begin{tabular}{@{}ll}
\textbf{Input:} & $\dodistr{R}{}{S}$ \\
\textbf{Output:} & $\dodistr{R \setminus \de[\mathcal{G}((V\setminus S),S)](B)}{}{S \cup \{B\}}$
\end{tabular}

\vspace{0.5em}
\textbf{Conditions:} None.
\end{kernelbox}

\subsubsection{Generalized proximal ID algorithms}
\label{sec:generalized_proximal_ID_algorithms}
To define an identification algorithm, we first choose a latent projection $\mathcal{G}(V \setminus H)$ as described in \cref{sec:district_factorization} and determine the corresponding district factorization by \cref{lem:district_factorization}. Starting from the observational distribution $p(V)$, we apply a sequence of kernel operations and obtain a corresponding sequence of identified kernels. For each target kernel $\dodistr{D}{}{V^*\setminus D}$ appearing in the district factorization, our objective is to construct a valid sequence of kernel operations that identifies $\dodistr{D}{}{V^* \setminus D}$. We refer to $\dodistr{D}{}{V^* \setminus D}$ as the \textit{target kernels} of our algorithm, which are defined by a choice of latent projection set $H$.

Using \texttt{Obf} and \texttt{Tbf} complicates the algorithm in two ways. First, we note that the kernel operation \texttt{Obf} takes two kernels as input and requires sufficient overlap between $R_1$ and $R_2$. Recall that the outcome bridge equation \eqref{eq:outcome_bridge_function} and the identification equation \eqref{eq:outcome_bridge_identifying_functional_extended} do not use the full $p(y,w,z,a,x)$ distribution, but instead use only the marginals $p(w,z,a,x)$ and $p(y,z,a,x)$. For this reason, \texttt{Obf} does not require a single kernel as input containing both $W$ and $Z$, provided that the intersection between $R_1$ and $R_2$ contains the analogs of $Z,A,X$. To take advantage of this additional flexibility, we apply kernel operations on two sequences of kernels $P_1$ and $P_2$ simultaneously and use them as separate inputs to \texttt{Obf}. This approach of using two kernels simultaneously was first described in \citet{Shpitser2021}. Unlike the outcome bridge approach, the treatment bridge approach does explicitly depend on the full $p(y,w,z,a,x)$ distribution as input. Therefore, $\texttt{Tbf}$ and similarly $\texttt{Ebf}$ require as input a single kernel that contains all relevant variables, and an analogous strategy of splitting the input between two kernels is not possible.

Second, both $\texttt{Obf}$ and $\texttt{Tbf}$ remove sets of observed proxy variables from $R_1$ without adding them to the intervention set, effectively removing them from the kernel. For a general kernel $\dodistr{A}{B}{C}$, we define $\texttt{Var}(\dodistr{A}{B}{C}) = A \cup B \cup C$ as the set of all variables contained in the kernel. For example, any kernel that we can identify after applying $\texttt{Obf}_{W,Z}$ will not depend on $W$, so for any kernel $P$, $\texttt{Var}(\texttt{Obf}_{W,Z}(B)[P]) \subsetneq \texttt{Var}(P)$. Therefore, we can only apply $\texttt{Obf}_{W,Z}$ to identify a target kernel $\dodistr{D}{}{V^* \setminus D}$ that does not depend on $W$, i.e. if $W \cap \texttt{Var}(\dodistr{D}{}{V^* \setminus D}) = \emptyset$ with $\texttt{Var}(\dodistr{D}{}{V^* \setminus D}) = V \setminus H$. Consequently, we might include in the latent projection set $H$ any proxy variable $W$ that we want to use in a kernel operation $\texttt{Obf}_{W,Z}$ as part of the algorithm. This ensures that $W$ does not appear in any of the target kernels. We can argue similarly for $\texttt{Tbf}_{W,Z}$ and the dependence of a target kernel on $Z$.
Thus, we could choose $H =  U \cup \setRV{W} \cup \setRV{Z}$, where $\setRV{W}$ contains all outcome-inducing proxies that are used in $\texttt{Obf}_{W,Z}$ kernel operations and $\setRV{Z}$ contains all treatment-inducing proxies that are used in $\texttt{Tbf}_{W,Z}$ kernel operations. This choice, however, is not necessarily optimal.\footnote{For some smaller choice of $H$, a given proxy variable $W$ might only appear in a target kernel that can be identified without ever needing to apply $\texttt{Obf}_{W,Z}$. In certain scenarios, we might thus want to choose a smaller $H \subsetneq U \cup \setRV{W} \cup \setRV{Z}$. \\
$\texttt{Obf}_{W,Z}(B)$ also removes a subset of $Z$ that is causally affected by $B$ from the kernel but not contained in $\setRV{W} \cup \setRV{Z}$. In other scenarios, we might thus also want to choose a larger $H \supsetneq U \cup \setRV{W} \cup \setRV{Z}$. If we can exclusively use \texttt{Ebf} and \texttt{Fix}, we might not have to include any observed proxies in $H$ at all.} For this reason, we formulate the generalized proximal ID algorithm in a way that leaves the choice of the latent projection set $H$ open.

At first glance, \texttt{Ebf} might appear to be strictly stronger compared to \texttt{Obf} and  \texttt{Tbf}. Although \texttt{Ebf} can, in principle, identify a larger marginal of the interventional distribution, it relies on stricter assumptions than \texttt{Obf} and \texttt{Tbf}. Furthermore, \texttt{Obf} offers the additional flexibility to take two kernels as input. We will therefore treat all three proximal kernel operations as different tools available to the algorithm, requiring different inputs and different assumptions, and yielding different output kernels. This yields the following general framework for proximal ID algorithms. 

\begin{theorem}[Generalized proximal ID algorithm] 
    \label{thm:generalized_proximal_ID_algorithm}
    Let a causal graph $\mathcal{G}(V \cup U)$ with disjoint sets $A, Y \subseteq V$ be given. Choose a latent projection set $H \supseteq U$. Let $V^* = V \setminus H$ and $Y^*$ be the set of ancestors of $Y$ in $\mathcal{G}(V^*)$ via directed paths that do not intersect $A$. Then $\dodistr{Y}{}{A}$ is identified from $p(V)$ in the causal model represented by $\mathcal{G}(V \cup U)$ given latent projection set $H$, if for every district $D \in \mathcal{D}(\mathcal{G}(V^*)_{Y^*})$ there exists an enumeration $B_1, \dots ,B_n$  of $(V^* \setminus D)$ and tuples
    \begin{align*}
        (B_1,K_1,W_1,Z_1), \dots, (B_n,K_n,W_n,Z_n) \quad \text{with }  K_i \in \{\texttt{Fix}, \texttt{Obf}, \texttt{Tbf}, \texttt{Ebf}\} \text{ and } W_i, Z_i \subseteq V \setminus (A \cup Y) \; ,
    \end{align*}
    for $i=1, \dots, n$, such that \cref{alg:proximal_ID_step_1} terminates successfully. Then we have
    \begin{align*}
        \dodistr{Y}{}{a} = \sum_{Y^* \setminus Y} \prod_{D \in \mathcal{D}(\mathcal{G}(V^*)_{Y^*})} \dodistr{D}{}{s_D},
    \end{align*}
    with every kernel $\dodistr{D}{}{s_D}$ identified via \cref{alg:proximal_ID_step_1} with a suitable sequence of tuples $(B_i,K_i,W_i,Z_i)$.

\begin{algorithm}[h!]
\caption{Proximal identification step}
\label{alg:proximal_ID_step_1}
\begin{algorithmic}[1]

\State \textbf{Input:} causal model with hidden variable DAG $\mathcal{G}(V\cup U)$,  and
tuples 
\State $ (B_1,K_1,W_1,Z_1), \dots, (B_n,K_n,W_n,Z_n) \in V \times \{\texttt{Fix}, \texttt{Obf}, \texttt{Tbf}, \texttt{Ebf}\} \times \mathcal{P}(V) \times \mathcal{P}(V)$.
\State \textbf{Output:} $\dodistr{D}{}{V^* \setminus D}$ or FAIL.

\vspace{0.3em}

\State $P_1 \gets p(V)$ 
\State $P_2 \gets p(V)$

\vspace{0.3em}

\For{$i = 1, \dots ,n$}
    \State $(B,K,W,Z) \gets (B_i,K_i,W_i,Z_i)$
    \vspace{0.3em}
    \State \textbf{Step 1: Update $P_1$}
    \If{$K = \texttt{Fix}$ and \texttt{Fix}(B) conditions hold for $P_1$}
        \State $P_1 \gets \texttt{Fix}(B)[P_1]$
    \ElsIf{$K = \texttt{Obf}$ and $\texttt{Obf}_{W,Z}(B)$ conditions hold for $(P_1,P_2)$}
        \State $P_1 \gets \texttt{Obf}_{W,Z}(B)[P_1,P_2]$
    \ElsIf{$K = \texttt{Tbf}$ and $\texttt{Tbf}_{W,Z}(B)$ conditions hold for $P_1$}
        \State $P_1 \gets \texttt{Tbf}_{W,Z}(B)[P_1]$
    \ElsIf{$K = \texttt{Ebf}$ and $\texttt{Ebf}_{W,Z}(B)$ conditions hold for $P_1$}
        \State $P_1 \gets \texttt{Ebf}_{W,Z}(B)[P_1]$
    \Else
        \State \Return FAIL
    \EndIf
    \vspace{0.3em}
    \State \textbf{Step 2: Update $P_2$}
    \If{\texttt{Fix}(B) condition holds for $P_2$}
        \State $P_2 \gets \texttt{Fix}(B)[P_2]$
    \Else
        \State $P_2 \gets \texttt{Cut}(B)[P_2]$    
    \EndIf
    \vspace{0.3em}
    \State \textbf{Step 3: Check kernel}
    \If{$V^* \not\subseteq \texttt{Var}(P_1)$}
        \State \Return FAIL  
    \EndIf
\EndFor

\vspace{0.3em}

\State \Return marginal kernel of $P_1$ on $D$

\end{algorithmic}
\end{algorithm}

\end{theorem}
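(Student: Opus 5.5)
The proof is a soundness argument: every kernel the algorithm produces is a correct identifying functional of the observational distribution, and the district factorization then assembles these into $\dodistr{Y}{}{a}$. I would proceed in three steps.

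\textbf{Step 1: reduction to target kernels.} By \cref{lem:district_factorization}, applied to the ADMG $\mathcal{G}(V^*)_{Y^*}$ obtained by latent projecting out $H \supseteq U$, the interventional distribution $\dodistr{Y}{}{a}$ is a sum over $Y^*\setminus Y$ of the product of the kernels $\dodistr{D}{}{s_D}$. Because the model is an NPSEM-IE, the exclusion restriction $D(s_D) = D(t_{V^*\setminus D})$ holds for every assignment $t$ consistent with $A=a$ and $s_D$. Hence each $\dodistr{D}{}{s_D}$ is obtained from the target kernel $\dodistr{D}{}{V^*\setminus D}$ by evaluating the intervention set at any consistent value. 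It therefore suffices to show that, for every district $D$, a successful run of \cref{alg:proximal_ID_step_1} returns $\dodistr{D}{}{V^*\setminus D}$ as a functional of $p(V)$.

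\textbf{Step 2: a loop invariant.} I would prove by induction on $i$ the following invariant for the pair $(P_1,P_2)$. After step $i$, both $P_1$ and $P_2$ are identified kernels of the form $\dodistr{R}{}{S_i}$ with the common intervention set $S_i = \{B_1,\dots,B_i\}$. Each is a valid interventional kernel of the causal model, with its CADMG built as in \cref{sec:CADMGs_and_fixing}, and the identifying functional depends on $p(V)$ only.

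The base case $P_1=P_2=p(V)$ is trivial.

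For the induction step I would invoke soundness of each kernel operation under its stated conditions. For \texttt{Fix} this is the standard fixing result of \citet{Shpitser2021}. For \texttt{Obf}, \texttt{Tbf} and \texttt{Ebf}, soundness is the kernel versions of \cref{prop:proximal_backdoor_extended}, \cref{prop:proximal_IPW_formula_extended} and \cref{cor:identify_entire_joint_with_outcome_bridge,cor:identify_entire_joint_with_treatment_bridge}. For these I would use the validity arguments of \cref{sec:proof_Obf_validity,sec:proof_Tbf_validity,sec:proof_Ebf_validity}, redoing the proofs with the observational law replaced by the current kernel $\dodistr{R}{}{S}$. For \texttt{Cut}, soundness is only marginalization plus the exclusion restriction: descendants of $B$ are dropped, and the remaining variables satisfy $R'(b) = R'$ under further intervention on $B$.

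The subtle point is \texttt{Obf}, which takes $P_1$ and $P_2$ jointly. Here I need that both kernels share the intervention set $S_{i-1}$, which holds because Step 2 of the loop updates $P_2$ with the same $B$. I also need that $P_2$ contains the variables demanded by condition 2, namely $W\cup(R_1\setminus O)$. This is exactly what is checked, and only the marginals $p(w,z,\cdot)$ and $p(y,z,\cdot)$ enter the bridge equation, as noted in the text.

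\textbf{Step 3: termination.} If the loop finishes, $P_1 = \dodistr{R}{}{V^*\setminus D}$, since the enumeration exhausts $V^*\setminus D$. The check in Step 3 of the loop guarantees $V^*\subseteq \texttt{Var}(P_1)$, so $D\subseteq R$. The returned marginal on $D$ is then $\dodistr{D}{}{V^*\setminus D}$. Substituting these kernels into the district factorization yields the displayed formula.

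I expect the main obstacle to be Step 2 for the proximal operations. Specifically, one must verify that the operation output is the interventional kernel stated, even though proxies $Z^*$ or $W^*$ descending from $B$ are dropped, and even though intermediate kernels involve variables in $H\cap V$ that are later projected away. I would handle this by a SWIG-style argument within the CADMG $\mathcal{G}(R,S)$. The idea is to show that the assumptions referenced for $U^*$ give latent ignorability for the remaining variables $R\setminus(W^*\cup Z^*\cup\{B\})$. Then the completeness step of \cref{prop:Proximal_backdoor_formula} can be reused verbatim with $Y$ replaced by this set.
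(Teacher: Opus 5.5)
Your proposal is correct and follows the paper's own argument: district factorization (\cref{lem:district_factorization}) reduces the claim to the target kernels $\dodistr{D}{}{V^*\setminus D}$. Induction over the loop, using the validity of the kernel operations and the check $V^*\subseteq\texttt{Var}(P_1)$, then shows the returned marginal is that target kernel. Your explicit invariant that $P_1$ and $P_2$ share the intervention set (needed for \texttt{Obf}) and your direct justification of \texttt{Cut} (the paper gives no separate proposition for it) spell out points the paper leaves implicit. One caution on your last paragraph: in the validity proofs only $O$ (or $(O,W)$ for the extended outcome bridge) plays the outcome role, while $X$ and the non-descendant proxies $Z\setminus Z^*$, $W\setminus W^*$ are carried through exclusion restrictions as in \cref{prop:proximal_backdoor_extended}; literally replacing $Y$ by the whole remaining set would put part of $Z$ into the outcome and violate the treatment-proxy condition.
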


A proof sketch of \cref{thm:generalized_proximal_ID_algorithm} is given in \cref{sec:proof_of_generalized_ID_algorithm}. The original proximal ID algorithm proposed by \citet{Shpitser2021} can be interpreted as a special case of \cref{thm:generalized_proximal_ID_algorithm} by imposing the conditions $W_i \subseteq H$ and $K_i \in \{\texttt{Fix},\texttt{Obf}\}$. Similarly, the standard ID algorithm via fixing from \citet{Richardson2023} is a special case with $H = U$ and all $K_i = \texttt{Fix}$. For notational clarity, the algorithm presented here does not keep track of past kernel operations and only outputs the identified kernel. An extension to output the identifying functional itself is straightforward. 

An important aspect of the algorithm is the introduction of a second sequence of kernels, denoted by $P_2$, that is used in \texttt{Obf}. At each step, we add the same variable $B$ to the intervention set in both kernels $P_1$ and $P_2$ using different kernel operations. $P_1$ might remove a proxy $W$ or $Z$ from its kernel sequence by applying $\texttt{Obf}$ or $\texttt{Tbf}$, while $P_2$ still retains the proxy variable. Therefore, $P_2$ ``aims'' to keep proxies included in the problem. This allows us to possibly reuse certain proxies multiple times throughout the identification algorithm, as the required kernels that include these proxies are provided by $P_2$.

Instead of using the kernels of the form $\dodistr{D}{}{V^* \setminus D}$ as target kernels, we could choose to target the kernels $\dodistr{D}{}{s_D}$ from the district factorization directly. As we argued above, the identification of $\dodistr{D}{}{s_D}$ is generally a weaker condition compared to the identification of $\dodistr{D}{}{V^*\setminus D}$.\footnote{In the standard non-parametric identification problem, \cite{Shpitser2006} shows that the identifiability of both kernels is equivalent. However, this equivalence will, in general, not be true for the proximal identification problem.} In this setting, we only require a sequence of $\{(B_i,K_i,W_i,Z_i)\}_{i=1}^m$ with $B_1, \dots, B_m$ enumerating $\pa(D) \setminus D$ and similarly need to enforce $\pa(D) \cup D \subseteq \texttt{Var}(P_1)$ at each step of the algorithm. In accordance with the proximal ID algorithm in \citet{Shpitser2021}, we  focus on the kernels of the form $\dodistr{D}{}{V^* \setminus D}$ as target kernels and do not further explore other choices of target kernels at this point.

The proximal ID algorithm yields only a sufficient condition for identifiability and is therefore sound but not complete. In contrast, the standard ID algorithm provides a sound and complete solution to \cref{prob:nonparametric_identification_problem}. More precisely, for any interventional distribution that the ID algorithm fails to identify, there exist two causal models on the given graph that agree on the observational distribution but disagree on the interventional distribution of interest. Hence, the ID algorithm also provides a necessary criterion for nonparametric identifiability in \cref{prob:nonparametric_identification_problem}. An analogous completeness result is not available for the proximal ID algorithm in \cref{prob:nonparametric_proximal_identification_problem}.

\subsubsection{Applying the generalized proximal ID algorithm}
\label{sec:applying_the_generalized_proximal_ID_algorithm}
We apply the generalized proximal ID algorithm described in \cref{thm:generalized_proximal_ID_algorithm} to the proximal front-door identification problem discussed in \cref{Subsec:Proximal_front_door}, based on the front-door graph in \cref{fig:proximal_front_door_graph}. 

One can easily check that $\dodistr{Y}{}{a}$ cannot be identified using only $\texttt{Fix}$.  Thus, we need to apply either $\texttt{Obf}$, $\texttt{Tbf}$, or $\texttt{Ebf}$. We can only use $W$ as outcome-inducing proxy and $Z$ as treatment-inducing proxy. As in \cref{thm:proximal_front_door_treatment_bridge}, we may choose to apply \texttt{Tbf} and use a treatment bridge function approach. To this end, let $H = \{U\}$, leading with $Y^*=\{Y,W,X,M\}$ to the districts $D_1 = \{M\}$ and $D_2 = \{Y,W,X\}$ and the target kernels $\dodistr{M}{}{A,W,X}$ and $\dodistr{Y,W,X}{}{A,M}$. We can easily identify $\dodistr{M}{}{A,W,X}$ using standard fixing operations. To identify $\dodistr{Y,W,X}{}{A,M}$, we first apply $\texttt{Fix}(M)$ to $p(Y,A,M,W,Z,X)$, yielding $\dodistr{Y,A,W,Z,X}{}{M}$. Then we can apply $\texttt{Tbf}_{W,Z}(A)$ to finally obtain $\dodistr{Y,W,X}{}{A,M}$. We observe that the proximal ID algorithm proceeds through the same sequence of steps as the proof of \cref{thm:proximal_front_door_treatment_bridge}. In particular, we recognize \cref{ass:kernel_district_decomposition} and \cref{ass:kernel_exclusion_restrictions_M} as algebraic conditions for the district factorization and the fixing ignorability to identify $\dodistr{M}{}{A,W,X}$. Similarly, \cref{ass:kernel_ignorability_M} is the necessary ignorability condition for $\texttt{Fix}(M)$, and \cref{ass:kernel_latent_ignorability,ass:kernel_valid_outcome_inducing_proxy,ass:kernel_valid_treatment_inducing_proxy,ass:kernel_treatment_completeness,ass:kernel_treatment_bridge_function} ensure that we can subsequently apply $\texttt{Tbf}_{W,Z}(A)$. 

As an alternative approach, we could try to apply an extended bridge function operation. However, $Z$ is a post-treatment proxy and is therefore included
in the set $Z^*$ of treatment proxies that are causally affected by $A$. Consequently, \texttt{Ebf} cannot recover an interventional kernel involving
$Z$. Thus, in this case, applying \texttt{Ebf} offers no further identification results beyond \texttt{Tbf}.

If we instead try to apply the proximal ID algorithm from \citet{Shpitser2021} to this graph, we are restricted to using only $\texttt{Fix}$ and $\texttt{Obf}$ and have to choose $H = \{U,W\}$. With $Y^*=\{Y,M,X\}$, we obtain the single district $\{Y,M,X\}$. This requires us to identify the single target kernel $\dodistr{Y,M,X}{}{A}$. But we are unable to apply any suitable kernel operation to $p(Y,A,M,W,Z,X)$, as neither the conditions for $\texttt{Fix}(A)$ nor for $\texttt{Obf}_{W,Z}(A)$ are satisfied. The proximal front-door graph in \cref{fig:proximal_front_door_graph} therefore provides an example in which \cref{thm:generalized_proximal_ID_algorithm} succeeds at identifying a kernel that remains unidentified by the algorithm described in \citet{Shpitser2021}.

The proximal front-door setting can also easily be modified to yield a graph in which the identification of the full joint interventional distribution via \texttt{Ebf} becomes necessary. The causal graph in \cref{fig:proximal_front_door_graph_comparison}(e) is a combination of both proximal front-door graphs (a) and (b) in \cref{fig:proximal_front_door_graph_comparison} with both $W$ and $Z$ affecting the mediator $M$. For \texttt{Obf} , we require $H = \{U,W\}$, and for \texttt{Tbf} , we require $H = \{U,Z\}$, leading to a single district containing both $M$ and $Y$ in both cases. But to apply any proximal identification method here, we need to fix $M$ first. Thus, neither \texttt{Obf} nor \texttt{Tbf} alone can suffice to identify $p(Y(a))$. If we are instead able to use \texttt{Ebf}, we may choose $H = \{U\}$. With $Y^* = \{Y,M,W,Z,X\}$, this induces the districts $D_1 = \{M\}$ and $D_2 = \{Y,W,Z,X\}$. We identify the target kernel of $D_1$ trivially via fixing operations and the target kernel of $D_2$ via $\texttt{Fix}(M)$ and $\texttt{Ebf}_{W,Z}(A)$. 

Overall, any proximal ID algorithm depends strongly on a ``good'' choice of kernel operations and proxy sets. This adds substantial combinatorial complexity compared to the standard ID algorithm. \citet{Richardson2023} showed that applying $\textsc{Fix}(B)$ does not change whether a different vertex $B'$ is fixable in the output kernel. Therefore, the \textsc{Fix} operations commute, eliminating the need for backtracking in the standard ID algorithm and reducing its computational complexity down to a low order polynomial time \citep{Shpitser2006,Shpitser2021}. The proximal ID algorithm introduces additional combinatorial complexity through the choice of latent projection set $H$ as well as in the choices of $(B_i,K_i,W_i,Z_i)$ at each step. From the front-door example, we could formulate the following heuristic: select proxy sets and subsequently $H$ such that the size of the districts in $\mathcal{G}(V \setminus H)$ is minimized. The problem of how to reduce the combinatorial complexity in a rigorously justifiable manner remains unresolved.

Similar to \citet{Shpitser2021}, we may argue that in any real-world scenario, the choices of proxy variables must be made very carefully by a human domain expert and should be strongly backed by real-world data. These choices cannot reasonably be made algorithmically without human insight. Consequently, proximal ID algorithms are best understood as a tool for assessing whether proximal causal inference methods are, in principle, sufficient for a given identification problem, given a certain set of assumptions, rather than for directly suggesting a practically applicable solution.

\newpage
\section{Summary}
\label{sec:summary}
We provided an overview of different identification methods for interventional kernels in NPSEM-IE causal models using proximal causal inference methods. We reviewed the main identification results from \citet{Shpitser2021} using outcome bridge functions and established similar results using the treatment bridge functions introduced in \citet{Cui2023}. Furthermore, we investigated how tools from proximal causal inference can be applied in a more general setting, complementing standard nonparametric identification theory. 

Our primary contribution is the introduction of extended bridge functions, together with two novel identification results for interventional kernels containing both outcome-inducing and treatment-inducing proxy variables. The second main contribution is a generalization of the proximal ID algorithm from \citet{Shpitser2021} within a framework of kernel operations, using both outcome and treatment bridge functions, as well as the newly introduced extended bridge functions. We applied this algorithm to multiple variations of the proximal front-door graph and derived a new identification result in this setting, complementing the proximal front-door criterion of \citet{Shpitser2021}. 

Many directions can be explored in future work. First, the extended bridge functions from \cref{sec:extended_bridge_functions} can be further investigated in the more general setting of continuous state spaces and general kernels. This will require refined regularity conditions to ensure the existence of solutions to the extended bridge equations. We provided an example application in which extended bridge functions confer advantages over standard outcome and treatment bridge functions. Many further applications remain to be explored. Extended bridge functions could, for example, be applied to the setting of optimal individualized treatment regimes, extending the results in \citet{Shen2023}.

Second, the generalized framework for proximal ID algorithms could be applied to other settings, including proximal mediation analysis \citep{Dukes2023, Ghassami2024} and proximal g-computation \citep{TchetgenTchetgen2024}. These settings provide additional applications in which proximal ID algorithms may yield new identification results or new identifying functionals under different conditions. 

Third, several technical extensions of the generalized proximal ID algorithm follow directly from the proposed framework. In general, the challenge of proximal identification lies in choosing which kernels to target and then exploring possible sequences of kernel operations and associated proxy sets in order to identify the target kernels. We presented an algorithm that uses district factorization to define its target kernels, relies on a specific selection of kernel operations, and keeps track of two sequences of kernels at the same time. These design choices for the algorithm are not necessarily optimal. While district factorization yields a necessary and sufficient condition for identification in the standard nonparametric setting, it only provides a sufficient condition for proximal identification. Furthermore, the implementation of kernel operations that take multiple kernels as inputs could be expanded by adding additional sequences of kernels to the algorithm. Overall, this framework of proximal ID algorithms can be regarded as a search algorithm operating in a space of kernels. The question of optimal search strategies remains unanswered and could be explored further.

Finally, we have not addressed statistical estimation. While much work exists on estimating solutions to integral equations in a variety of different settings, for example \citet{Kallus2021,Bennett2025,Cui2023}, the specific estimation problems arising in proximal ID algorithms have not been studied in detail. The statistical properties of an identifying functional will, in general, depend on the kernel operations used for identification. Selecting among different kernel operation sequences and constructing a practical estimator for the identified quantities gives rise to many nontrivial problems that remain open for future work.
\newpage 

\section*{Acknowledgements}

I am grateful to Qingyuan Zhao for interesting discussions and guidance throughout this project. I also thank Eric Tchetgen Tchetgen for comments and suggestions and Ilya Shpitser, Wang Miao and Axel Munk for useful discussions.

The author would like to thank the Isaac Newton Institute for Mathematical Sciences, Cambridge, for support and hospitality during the programme Causal inference: From theory to practice and back again, where work on this paper was undertaken. This work was supported by EPSRC grant EP/Z000580/1.

\phantomsection
\pdfbookmark[1]{Appendix}{pdfbk:appendix}
\section*{Appendix}
\label{sec:appendix}
\renewcommand{\thesubsection}{\Roman{subsection}}
\crefalias{section}{appsec}
\crefalias{subsection}{appsubsec}

\numberwithin{theorem}{subsection}
\setcounter{theorem}{0}

\aliascntresetthe{lemma}
\aliascntresetthe{corollary}
\aliascntresetthe{proposition}
\aliascntresetthe{definition}
\aliascntresetthe{remark}
\aliascntresetthe{example}

\subsection{Graph terminology}
\label{sec:Graphs_and_DAGs}
\begin{definition}
In a (mixed) graph $\mathcal{G} = (V,E)$ with vertices $v,w \in V$, we denote a directed edge between two vertices as $v \rightarrow w$ and a bidirected edge as $v \leftrightarrow w$. 
To describe the relationship between vertices, we use the terminology of kinship and call
\begin{enumerate}[i.)]
    \item the vertex $v$ a \emph{parent} of a vertex $w$, and $w$ a \emph{child} of $v$ if $v \rightarrow w$, 
    \item the vertex $w$ a \emph{descendant} of a vertex $v$ if there exists a directed path from $v$ to $w$. Conversely, we call $v$ an \emph{ancestor} of $w$.
\end{enumerate}
We define $\pa(W)$, $\an(W)$, and $\de(W)$ as the set containing all vertices that are parents, ancestors, and descendants of a vertex $w \in W$, respectively. In particular, $\de(W)$ contains $W$ itself.
\end{definition}

\subsection{Sufficient conditions for the existence of bridge functions}
\label{sec:conditions_for_existence_of_bridge_functions}
We recall from \citet{Miao2018} and \citet{Cui2023} the sufficient conditions for the existence of bridge functions $h$ and $q$ solving the bridge equations \eqref{eq:outcome_bridge_function} and \eqref{eq:treatment_bridge_function}. The following conditions are reproduced from Appendix B in \citet{Cui2023}.

For any cumulative distribution function $F: \mathbb{R}^n \rightarrow\mathbb{R^+}$, we define $L^2\{F (t)\}$ as the space of all functions mapping from $\mathbb{R}^n$ to $\mathbb{R}$ that are square integrable functions with respect to the measure $\mu_F$ induced by $F$ on $\mathbb{R}^n$. Then $L^2\{F (t)\}$ is a Hilbert space with inner product $\langle g_1, g_2\rangle = \int g_1(t)g_2(t)\diff F (t)$. We define two conditional expectation operators
\begin{align*}
    T_{a,x}: \;&L^2\{F(w|a, x)\} \rightarrow L^2\{F (z|a, x)\}, \; h \mapsto \mathbb{E}[h(W ) \, | \, z, a, x], \\
    T'_{a,x}:\;&L^2\{F(z|a, x)\} \rightarrow L^2\{F (w|a, x)\}, \; q \mapsto \mathbb{E}[q(Z) \, | \, w, a, x].
\end{align*}
Let $(\lambda_{a,x,n}, \phi_{a,x,n}, \psi_{a,x,n})_{n=1}^\infty$ be a singular value decomposition of $T_{a,x}$, and $(\lambda'_{a,x,n}, \phi'_{a,x,n}, \psi'_{a,x,n})_{n=1}^\infty$ be a singular value decomposition of $T'_{a,x}$. Now we can state \cref{auxass:outcome_bridge,auxass:treatment_bridge} that guarantee the existence of solutions to the bridge equations \eqref{eq:outcome_bridge_function} and \eqref{eq:treatment_bridge_function}, respectively.
\begin{auxassumption}
    \label{auxass:outcome_bridge}
    We assume completeness of $W$ with respect to $Z$, in the sense that for every $g \in L^2(Z)$ and for all $a \in \mathcal{X}_A$, $x \in \mathcal{X}_X$ 
    \begin{align*}
        \mathbb{E}[g(Z)\,| \,W, A = a, X = x] = 0 \text{ almost surely} \ \Longleftrightarrow \ g(Z) = 0 \text{ almost surely.}
    \end{align*}
    Furthermore, we assume the following regularity condition
    \begin{enumerate}
        \item $\int \int f(w \, | \, z,a,x) \, f(z \, | w,a,x) \diff w \diff z < \infty$,
        \item $\int f^2(y \, | \, z,a,x)  f(z \,| \, a,x) \diff z < \infty$,
        \item $\sum_{n=1}^\infty \lambda_{a,x,n}^{-2} |\langle f(y \, | \, z,a,x) , \phi_{a,x,n} \rangle|^2 < \infty$.
    \end{enumerate}
\end{auxassumption}

\begin{auxassumption}
    \label{auxass:treatment_bridge}
     We assume completeness of $Z$ with respect to $W$, in the sense that for every $g \in L^2(W)$ and for all $a \in \mathcal{X}_A$, $x \in \mathcal{X}_X$ 
    \begin{align*}
        \mathbb{E}[g(W)\,| \,Z, A = a, X = x] = 0 \text{ almost surely} \ \Longleftrightarrow \ g(W) = 0 \text{ almost surely.}
    \end{align*}
    Furthermore, we assume the following regularity conditions:
    \begin{enumerate}
        \item $\int \int f(w \, | \, z,a,x) \, f(z \, | w,a,x) \diff w \diff z < \infty$,
        \item $\int f^{-2}(a \,| \, w,x)  f(w \,| \, a,x) \diff w < \infty$,
        \item $\sum_{n=1}^\infty \lambda_{a,x,n}^{'-2} |\langle f^{-1}(a \,| \, w,x), \phi'_{a,x,n} \rangle|^2 < \infty$.
    \end{enumerate}
\end{auxassumption}

\subsection{Extended bridge functions for general state spaces}
\label{sec:extended_bridge_functions_for_general_state_spaces}
We give a proof sketch of \cref{cor:identify_entire_joint_with_outcome_bridge} and \cref{cor:identify_entire_joint_with_treatment_bridge} for general, non-discrete random variables under the additional assumption that the integral equations \eqref{eq:extended_outcome_bridge} and \eqref{eq:extended_treatment_bridge} admit solutions in this general setting. As noted above, the practical significance of these results critically depends on the extent to which the corresponding integral equations can be solved in a more general framework that extends beyond the discrete setting

\begin{corollary}
    \label{app:identify_entire_joint_with_outcome_bridge}
    Let $A,Y,W,Z,X$ and $U$ be (not necessarily discrete) random variables. If we assume the existence of an extended outcome bridge function $h$ in \cref{ass:extended_outcome_bridge_function}, the joint distribution $\dodistr{Y,W,Z,X}{}{a}$ is identifiable under \cref{ass:ignorability_given_W_Z,ass:exclusion_restriction_W_Z_X,ass:valid_treatment_inducing_proxy,ass:valid_outcome_inducing_proxy,ass:valid_treatment_inducing_proxy_with_W,ass:outcome_completeness}.
    \begin{proof}
        As detailed in the proof of \cref{cor:identify_entire_joint_with_outcome_bridge},        \cref{ass:valid_outcome_inducing_proxy,ass:valid_treatment_inducing_proxy_with_W,ass:ignorability_given_W_Z} imply conditional independencies \eqref{eq:valid_treatment_inducing_proxy_as_Y_prime} and \eqref{eq:ignorability_for_Y_prime}. We define the outcome variable $Y' = (Y,W)$. We can easily deduce the  assumptions that are necessary to apply \cref{prop:proximal_backdoor_extended} to $(Y',W,Z,A,X)$. In particular, we need analogs of \cref{ass:valid_outcome_inducing_proxy,ass:valid_treatment_inducing_proxy,ass:latent_positivity,ass:outcome_bridge_function,ass:outcome_completeness,ass:exclusion_restriction_Z_X,ass:ignorability_given_Z} for the new outcome $Y'$.
        \begin{enumerate}
            \item Equations \eqref{eq:ignorability_for_Y_prime} and \eqref{eq:valid_treatment_inducing_proxy_as_Y_prime} give versions of latent ignorability \cref{ass:ignorability_given_Z} and \cref{ass:valid_treatment_inducing_proxy} for the new outcome $Y'$.
            \item Assumption \ref{ass:valid_outcome_inducing_proxy} remains unchanged under the redefined outcome variable $Y'$.
            \item The positivity \cref{ass:ignorability_given_Z} remains unchanged and is implied by \cref{ass:ignorability_given_W_Z}.
            \item The exclusion restrictions in \cref{ass:exclusion_restriction_Z_X} remain unchanged and are implied by \cref{ass:exclusion_restriction_W_Z_X}.
            \item The completeness \cref{ass:outcome_completeness} also remains unchanged.
            \item Finally, the existence of a bridge function for $Y'$ corresponding to \cref{ass:outcome_bridge_function} is given in \cref{ass:extended_outcome_bridge_function}.
        \end{enumerate}
        Then we can apply the proximal backdoor formula from \cref{prop:proximal_backdoor_extended} to conclude
        \begin{align*}
            \dodistr{y',z,x}{}{a} = \dodistr{y,w,z,x}{}{a} = \sum_{w'} h(y,w,w',a,x) \, p(w',z,x).
        \end{align*}
        \end{proof}
\end{corollary}

\begin{corollary}
    \label{app:identify_entire_joint_with_treatment_bridge}
    Let $A,Y,W,Z,X$ and $U$ be (not necessarily discrete) random variables. If we assume the existence of an extended treatment bridge function $q$ in \cref{ass:extended_treatment_bridge_function}, the joint distribution $\dodistr{Y,W,Z,X}{}{a}$ is identifiable under \cref{ass:ignorability_given_W_Z,ass:exclusion_restriction_W_Z_X,ass:valid_outcome_inducing_proxy,ass:valid_treatment_inducing_proxy_with_W,ass:proxy_positivity,ass:treatment_completeness}.
    \begin{proof}
        As in \cref{cor:identify_entire_joint_with_treatment_bridge}, we use \cref{ass:valid_outcome_inducing_proxy} and the existence of a bridge function in \cref{ass:extended_treatment_bridge_function} to calculate
        \begin{align*}
            \sum_{u} \frac{p(z \, | \, u,x)}{p(a \, |\, u,x)} p(u \, | \, w,a,x) = \frac{p(z \, | \, w,x)}{p(a \, | \, w,x)} = \sum_u \sum_{z'} q(z,z',a,x) \, p(z' \, | \, u,a,x) \, p(u \, | \, w,a,x).
        \end{align*}
        With completeness \cref{ass:treatment_completeness}, we conclude
        \begin{align}
            \label{eq:app_bridge_equation_with_Z}
            \frac{p(z \, | \, u,x)}{p(a \, |\, u,x)} = \sum_{z'} q(z,z',a,x) \, p(z' \, | \, u,a,x).
        \end{align}
        Finally, we can use \cref{ass:ignorability_given_W_Z,ass:exclusion_restriction_W_Z_X} and consistency to conclude
        \begin{align*}
            \dodistr{y,w,z,x}{}{a} &= \sum_{u} p(y \, | \, a,w,z,u,x) \, p(w,z,u,x) \\
            &= \sum_{u} p(y \, | \, a,w,u,x) \,p(w,u,x)\, p(z \,| u,x) && \text{(by \ref{ass:valid_outcome_inducing_proxy} and \ref{ass:valid_treatment_inducing_proxy_with_W})} \\
            &= \sum_{u} p(y,a,w,u,x) \,\frac{p(z \, | \, u,x)}{p(a \, |\, u,x)}  && \text{(by \ref{ass:valid_outcome_inducing_proxy})}\\
            &= \sum_{u} p(y,a,w,u,x) \sum_{z'} q(z,z',a,x) \, p(z' \, | \, u,a,x) && \text{(by \ref{eq:app_bridge_equation_with_Z})} \\
            &= \sum_{z'} q(z,z',a,x) \, p(y,w,z',a,x).   && \text{(by  \ref{ass:valid_outcome_inducing_proxy} and \ref{ass:valid_treatment_inducing_proxy_with_W})} 
        \end{align*}
        \end{proof}
\end{corollary}

\subsection{ADMGs and latent projections}
The following definitions were adapted from \citet{Verma1990}, \citet{Richardson2023}, and \citet{Shpitser2021}.
\label{sec:ADMGs_and_latent_projections}
\begin{definition}[Acyclic Directed Mixed Graph (ADMG)] 
        A mixed graph $\mathcal{G}= (V,E)$ is called an \textit{Acyclic Directed Mixed Graph} (ADMG) if it contains no directed cycles and at most one directed edge and one bidirected edge between each pair of vertices. 
\end{definition}
\begin{definition} [Districts] Consider an ADMG $\mathcal{G}= (V,E)$.
    \begin{enumerate}[i.)]
    \item We call a set $W \subset V$ \emph{bidirectionally connected} if, for all $w_1, w_2 \in W$, there exists a bidirected path connecting $w_1$ and $w_2$.
    \item A bidirected connected set $D$ is called a \emph{district} if it is maximally bidirected connected, that is, there exists no larger bidirected connected set $W \supsetneq D$. We define $\mathcal{D}(\mathcal{G})$ as the set of all districts of the ADMG. For $v \in V$, we refer to the district containing $v$ as $\dis(v)$. 
    \end{enumerate}
\end{definition}

\begin{definition}[Latent projections]
    Consider a DAG or ADMG $\mathcal{G}(V) = (V,E)$ and a partition $V = A \cup B$. A \emph{latent projection} of $\mathcal{G}(V)$ onto $A$ yields an ADMG $\mathcal{G}(A)$ on the vertex set $A$ such that for all vertices $a_1, a_2 \in A$
    \begin{enumerate}
        \item $a_1 \rightarrow a_2$ in $\mathcal{G}(A)$ iff there exists a directed path from $a_1$ to $a_2$ in $\mathcal{G}(V)$ and all non-endpoint vertices are in $B$, for example $a_1 \rightarrow a_2$ or $a_1 \rightarrow b_1 \rightarrow \dots \rightarrow b_k \rightarrow a_2$ for $b_1, \dots, b_k \in B$.
        \item $a_1 \leftrightarrow a_2$ in $\mathcal{G}(A)$ iff there exists a path between $a_1$ and $a_2$ such that
        the non-endpoints are all non-colliders in $B$ and such that the edge adjacent to $a_1$
        and the edge adjacent to $a_2$ both have arrowheads at those vertices. For example, $a_1 \leftrightarrow b_1 \rightarrow \dots  \rightarrow b_k \rightarrow a_2$ for $b_1, \dots, b_k \in B$.
    \end{enumerate}
\end{definition}
As a direct consequence of the definition, we may conclude that latent projections compose trivially. For $T \subseteq U \subseteq V$, latent projection $\mathcal{G}(V)$ onto $\mathcal{G}(U)$ followed by a latent projection onto $\mathcal{G}(T)$ yields the same ADMG as directly performing a latent projection of $\mathcal{G}(V)$ onto $\mathcal{G}(T)$.

\subsection{CADMGs and fixing}
\label{sec:CADMGs_and_fixing}
Consider an ADMG $\mathcal{G}(V)$ that is obtained by latent projection of a hidden variable DAG $\mathcal{G}(V \cup U)$. Based on \citet{Richardson2023}, we define a CADMG $\mathcal{G}(R,S)$ as a graphical representation of the kernel $\dodistr{R}{}{S}$ for some $R,S \subset V$.

\begin{definition}
    We define the conditional ADMG (CADMG) corresponding to $\dodistr{R}{}{S}$ in two steps.
    \begin{enumerate}
        \item We perform a latent projection on the ADMG $\mathcal{G}(V)$ to yield the CADMG $\mathcal{G}(R \cup S)$, projecting out the vertices in $V \setminus (R \cup S)$.
        \item We obtain an ADMG $\mathcal{G}(R, S)$ on the vertex set $R \cup S$ with two disjoint sets of vertices $R$ and $S$ by removing all incoming edges of the form $v \rightarrow s$ or $v \leftrightarrow s$ for all $s \in S$, $v \in V$. This bipartite graph is called a CADMG.
    \end{enumerate}
\end{definition}

In the derivation of the proximal ID algorithm in \citet{Shpitser2021}, the order of performing the intervention operation on the vertex set $S$ and latent projection is swapped; i.e., we latent project a suitable CADMG. Our definition is equivalent and closer to \citet{Richardson2023}. 
\begin{definition}
    \label{def:fixable_vertex}
    Given a CADMG $\mathcal{G}(R, S)$, the set of fixable vertices is defined as 
    \begin{align*}
        \mathbb{F}(\mathcal{G}(R,S)) := \{ r \in R \, | \,  \dis[\mathcal{G}(R, S)](r) \cap \de[\mathcal{G}(R, S)](r) = \{r\} \}.
    \end{align*}
\end{definition}

\begin{proposition}
 \label{prop:fix_kernel_operation_valid}
 The \texttt{Fix} kernel operation defined in \cref{sec:Kernel_operations} is valid.
 \begin{proof}
    Consider a causal model with the hidden variable DAG $\mathcal{G}(V \cup U)$ as the causal graph. 
    Let $S,R \subseteq V$ be disjoint. \citet{Richardson2023} show that if $B \in R$ is fixable in $\mathcal{G}(R,S)$, then $\dodistr{R \setminus \{B\}}{}{S \cup \{B\}}$ is identified from $\dodistr{R}{}{S}$ via 
    \begin{align*}
        \dodistr{R \setminus \{B\}}{}{S \cup \{B\}} = \frac{\dodistr{R}{}{S}}{\dodistr{B}{\mb[\mathcal{G}(R,S)](B) }{S}},
    \end{align*}
    where we define the Markov blanket $\mb(B) = \pa[\mathcal{G}(R,S)](\dis[\mathcal{G}(R,S)](B)) \setminus (S \cup \{B\})$ as the set of all random vertices that are either in the same district as $B$ or a parent of the vertices in that district. 
 \end{proof}
\end{proposition}

\subsection{Outcome bridge function kernel operation \texttt{Obf}}
\label{sec:proof_Obf_validity}
We state the additional conditions of the kernel operation $\texttt{Obf}$ and prove the identification of the output kernel. Let $\dodistr{R_1}{}{S}$ and $\dodistr{R_2}{}{S}$ be two kernels, let $B \in R_1$ be the treatment variable, let $U^* \subseteq U$ be the latent confounders, and let $Z \subseteq R_1$ be the treatment-inducing proxy variables. In addition, let outcome-inducing proxy variables $W$ be given such that we have access to $\dodistr{W \cup (R_1 \setminus O)}{}{S}$ for ``outcome'' variables $O := \de[\mathcal{G}(R_1 \setminus W,S)](B) \setminus (Z \cup \{B\})$. That is, we require either $W \subseteq R_1$ or, alternatively, $W \cup (R_1 \setminus O) \subseteq R_2$.

We want to allow some descendants of $B$ to be used as part of the treatment-inducing proxy set $Z$ for setting up the bridge equation. These descendants, however, cannot be included in the output kernel, as their causal relation to $B$ will, in general, be confounded by $U^*$. This reflects the fact that the proximal backdoor formula in \cref{prop:proximal_backdoor_extended} requires $Z$ to be a pre-treatment proxy variable in \cref{ass:exclusion_restriction_Z_X}. In the general setting, we define $Z^* :=  \de[\mathcal{G}(R_1 \setminus W,S)](B) \cap Z$ as the subset of the treatment-inducing proxies that will be causally affected in $\mathcal{G}(R_1 \setminus W,S)$ by an additional intervention on $B$.\footnote{We choose to define \texttt{Obf} using $Z^*$ because the part of the kernel that has to be removed from the output kernel is central to our discussion of kernel operations.} Finally, we define $X := R_1 \setminus (O \cup \{B\}\cup Z \cup W)$. Using the following assumptions, we will prove identifiability of 
\begin{align*}
    \dodistr{O \cup (Z \setminus Z^*) \cup X}{}{S \cup \{B\}} = \dodistr{R_1 \setminus (W \cup Z^* \cup \{B\})}{}{S \cup \{B\}}.
\end{align*}
We note that these assumptions follow from the slightly different set of assumptions in the corresponding ``Proximal Causal Inference Step'' of \citet{Shpitser2021}. We choose to replicate our assumptions from \cref{prop:Proximal_backdoor_formula} and \cref{prop:proximal_backdoor_extended} with the treatment variable $B$ and the outcome variable $O$. 

\begin{appassumption}[Latent ignorability]
\label{app:ignorability_given_Z}
$O(s,b) \indep B(s) \, | \, U^*(s), (Z \setminus Z^*)(s),X(s)$ for all $s \in \mathcal{X}_S, b \in \mathcal{X}_B$.    
\end{appassumption}
\begin{appassumption}[Valid outcome-inducing proxy]
\label{app:kernel_valid_outcome_inducing_proxy}
$W(s) \indep (Z(s),B(s)) \, | \, U^*(s),X(s)$ for all $s \in \mathcal{X}_S$.
\end{appassumption}
\begin{appassumption}[Valid treatment-inducing proxy]
\label{app:kernel_valid_treatment_inducing_proxy}
$O(s) \indep Z(s) \, | \, B(s),U^*(s),X(s)$ for all $s \in \mathcal{X}_S$.
\end{appassumption}
Using the kernel notation, the completeness \cref{ass:outcome_completeness} generalizes as follows.
\begin{appassumption}[Completeness]
\label{app:outcome_completeness}
For any $g \in L^2(U^*)$  and all $b \in \mathcal{X}_B,x \in \mathcal{X}_X,s \in \mathcal{X}_S$
\begin{align*}
    \sum_{u^*} g(u^*) \, \dodistr{u^*}{b,z,x}{s} = 0 \text{ $P_Z$-almost surely }\ \Longleftrightarrow \ g(U^*) = 0 \text{ almost surely}.
\end{align*}
\end{appassumption}

\begin{appassumption}[Outcome bridge function]
    \label{app:outcome_bridge_function}
    For all $s \in \mathcal{X}_S$ there exists an outcome bridge function $h^{(s)}(o,w,b,x)$, solving the integral equation
    \begin{align}
        \label{eq:app_outcome_bridge_function}
        \sum_{w} h^{(s)}(o,w,b,x) \, \dodistr{w}{z,b,x}{s} = \dodistr{o}{z,b,x}{s}.
    \end{align}
\end{appassumption}
By definition, we have $O\cup Z \cup X \cup \{B\}\subseteq R_1$. We note that $W \cup Z \cup X \cup \{B\}= W \cup (R_1 \setminus O) $ such that either $W \cup Z \cup X \cup \{B\} \subseteq R_1$ or $W \cup Z \cup X \cup \{B\} \subseteq R_2$ by assumption. Thus, all kernels that are required for defining the outcome bridge functions are given to the kernel operation as input. 

We now establish the identification of the outcome kernel of \texttt{Obf} by following the arguments used in the proofs of \cref{prop:Proximal_backdoor_formula} and \cref{prop:proximal_backdoor_extended}, taking particular care to distinguish between the full set of treatment-inducing proxies $Z$ and its subset $Z \setminus Z^*$.

\begin{proposition}
    \label{prop:obf_kernel_operation_valid}
    The kernel operation \texttt{Obf}, with input, output, and conditions as defined in \cref{sec:Kernel_operations}, is valid.  In particular, we require \cref{app:ignorability_given_Z,app:kernel_valid_outcome_inducing_proxy,app:kernel_valid_treatment_inducing_proxy,app:outcome_bridge_function,app:outcome_completeness}.  
    \begin{proof} 
    Following the same arguments as in the proof of \cref{prop:Proximal_backdoor_formula}, we can use \cref{app:kernel_valid_outcome_inducing_proxy,app:kernel_valid_treatment_inducing_proxy,app:outcome_completeness,app:outcome_bridge_function} and positivity to conclude 
    \begin{align}
    \label{eq:app_outcome_bridge_with_u}
    \dodistr{o}{b,u,x}{s} = \sum_{w} h^{(s)}(o,w,b,x) \, \dodistr{w}{u,x}{s}.
    \end{align}
    As all arguments until this point did not include intervention on $B$, we could use the entire set $Z$ as treatment-inducing proxies. Notably, we allowed for descendants of $B$ as part of the treatment-inducing proxies in the construction of the bridge equation. Likewise, we also required the completeness condition to include the entire set $Z$. Now we proceed as in \cref{prop:proximal_backdoor_extended} and consider the intervention on $B$. All following arguments only use the reduced set $\tilde{Z} :=Z \setminus Z^*$. Since neither $X$ nor $\tilde{Z}$ contain any descendants of $B$, we may conclude the exclusion restriction
    $X(s,b) = X(s)$ and $\tilde{Z}(s,b) = \tilde{Z}(s)$. Thus, we have by \cref{app:ignorability_given_Z} and consistency
    \begin{align*}
        \dodistr{o,\tilde{z},x}{}{s,b} = \sum_w h^{(s)}(o,w,b,x) \, \dodistr{w,\tilde{z},x}{}{s},
    \end{align*}
    using the same derivation as in \cref{prop:proximal_backdoor_extended} with \cref{app:kernel_valid_outcome_inducing_proxy,app:kernel_valid_treatment_inducing_proxy}. Notably, the identification formula needs $\dodistr{w,\tilde{z},x}{}{s}$ to be accessible from the inputs of the kernel operation. This follows from the fact that $W \cup Z \cup X \cup \{B\}$ contains $W \cup \tilde{Z} \cup X$ as a subset and, as established above, is itself entirely contained in either $R_1$ or $R_2$.
    \end{proof}
\end{proposition}

\subsection{Treatment bridge function kernel operation \texttt{Tbf}}
\label{sec:proof_Tbf_validity}
We state the additional conditions of the kernel operation $\texttt{Tbf}$ and prove the identification of the output kernel. As before, let $\dodistr{R}{}{S}$ be the input kernel, $B \in R$ the treatment variable, $U^* \subseteq U$ the latent confounders, $Z$ the treatment-inducing proxies, and $W \subseteq R$  the outcome-inducing proxy variables. Let $O := \de[\mathcal{G}(R \setminus Z,S)](B) \setminus (W \cup \{B\})$. 

Similar to \texttt{Obf}, we want to allow the set of outcome-inducing proxies $W$ to contain descendants of $B$ for setting up the bridge equation. Again, we will have to exclude these descendants from the output kernel, reflecting the fact that the proximal IPW formula in \cref{prop:proximal_IPW_formula_extended} requires $W$ to be a pre-treatment proxy variable in \cref{ass:exclusion_restriction_W_X}. To this end, we define $W^* :=  \de[\mathcal{G}(R \setminus Z,S)](B) \cap W$ as the subset of the outcome-inducing proxies that will be causally affected by an additional intervention on $B$. Finally, we define $X := R \setminus (O \cup \{B\}\cup Z \cup W)$. Using the following assumptions, we prove identifiability of 
\begin{align}
    p(O \cup (W \setminus W^*) \cup X\, || \, S \cup \{B\}) = \dodistr{R \setminus (Z \cup W^* \cup \{B\})}{}{S \cup \{B\}}.
\end{align}
Again, we replicate the assumptions from \cref{prop:Proximal_IPW_formula} and \cref{prop:proximal_IPW_formula_extended} with the treatment variable $B$ and the outcome variable $O$. Compared to $\texttt{Obf}$, we need the following additional assumptions.
\begin{appassumption}
\label{app:ignorability_given_W}
$O(s,b) \indep B(s) \, | \, U^*(s), (W \setminus W^*)(s),X(s)$ for all $s \in \mathcal{X}_S, b \in \mathcal{X}_B$.    
\end{appassumption}

\begin{appassumption}
\label{app:valid_treatment_inducing_proxy_with_W}
$O(s) \indep Z(s) \, | \, (W \setminus W^*)(s),B(s),U^*(s),X(s)$  for all $s \in \mathcal{X}_S$.
\end{appassumption}

\begin{appassumption}[Completeness]
\label{app:treatment_completeness}
For any $g \in L^2(U^*)$ and all $b \in \mathcal{X}_B,x \in \mathcal{X}_X,s \in \mathcal{X}_S$
\begin{align*}
    \sum_{u^*} g(u^*) \, \dodistr{u^*}{b,w,x}{s} = 0 \text{ $P_W$-almost surely }\ \Longleftrightarrow \ g(U^*) = 0 \text{ almost surely}.
\end{align*}
\end{appassumption}

\begin{appassumption}[Treatment bridge function]
    \label{app:treatment_bridge_function}
    For all $s \in \mathcal{X}_S$ there exists a treatment bridge function $q^{(s)}(z,b,x)$, solving the integral equation
    \begin{align}
        \label{eq:app_treatment_bridge_function}
        \sum_{z} q^{(s)}(z,b,x) \, \dodistr{z}{w,b,x}{s} = \frac{1}{\dodistr{b}{w,x}{s}},
    \end{align}
    assuming additionally that $\dodistr{b}{w,x}{s} > 0$ for all $b \in \mathcal{X}_B, w \in \mathcal{X}_W$, $x\in \mathcal{X}_X$ and $s \in \mathcal{X}_S$.
\end{appassumption}

Now we can prove the identification of the outcome kernel of \texttt{Tbf} by following the proof of \cref{prop:Proximal_IPW_formula} and \cref{prop:proximal_IPW_formula_extended}, again taking particular care to distinguish between the full set of outcome-inducing proxies $W$ and its subset $W \setminus W^*$.

\begin{proposition}
    \label{prop:tbf_kernel_operation_valid}
    The kernel operation \texttt{Tbf}, with input, output, and conditions as defined in \cref{sec:Kernel_operations}, is valid.  In particular, we require \cref{app:ignorability_given_W,app:kernel_valid_outcome_inducing_proxy,app:treatment_completeness,app:treatment_bridge_function,app:valid_treatment_inducing_proxy_with_W}.
    \begin{proof} 
    Following the same arguments as in the proof of \cref{prop:Proximal_IPW_formula}, we can use the  \cref{app:kernel_valid_outcome_inducing_proxy,app:treatment_bridge_function,app:treatment_completeness} to conclude
    \begin{align*}
        \frac{1}{\dodistr{b}{u,x}{s}} = \sum_z q^{(s)}(z,b,x) \, \dodistr{z}{b,u,x}{s}.
    \end{align*}
    As for \texttt{Obf}, the final argument only uses the reduced set $\tilde{W} := W \setminus W^*$. 
    Since neither $X$ nor $\tilde{W}$ contain any descendants of $B$, we may conclude the exclusion restriction
    $X(s,b) = X(s)$ and $\tilde{W}(s,b) = \tilde{W}(s)$. Then we proceed as in \cref{prop:proximal_IPW_formula_extended}, using \cref{app:ignorability_given_W,app:kernel_valid_outcome_inducing_proxy,app:valid_treatment_inducing_proxy_with_W} and consistency to conclude
    \begin{align*}
        \dodistr{o,\tilde{w},x}{}{b,s} = \sum_z q^{(s)}(z,b,x) \, \dodistr{o,\tilde{w},z,b,x}{}{s}.
    \end{align*}
    As we only use a single input kernel, all kernels required to define the bridge equation and identification formula are trivially available to the kernel operation.
\end{proof}
\end{proposition}

\subsection{Extended bridge function kernel operation \texttt{Ebf}}
\label{sec:proof_Ebf_validity}
We state the additional conditions of the kernel operation $\texttt{Ebf}$ and prove the identification of the output kernel. We will not differentiate between extended outcome bridge functions and extended treatment bridge functions, as their input and output kernels are identical. We use the same notation as in \cref{sec:proof_Obf_validity} and \cref{sec:proof_Tbf_validity}.

As before, we want to allow the set of proxies $W$ and $Z$ to contain descendants of $B$ for setting up the bridge equation and define $W^* :=  \de[\mathcal{G}(R,S)](B) \cap W$ and $Z^* :=  \de[\mathcal{G}(R,S)](B) \cap Z$. Using the following assumptions, we prove identifiability of 
\begin{align}
    p(O \cup (W \setminus W^*) \cup (Z \setminus Z^*)\cup X\, || \, S \cup \{B\}) = \dodistr{R \setminus (Z^* \cup W^* \cup \{B\})}{}{S \cup \{B\}}.
\end{align}
Again, we replicate the assumptions from \cref{cor:identify_entire_joint_with_outcome_bridge} and \cref{cor:identify_entire_joint_with_treatment_bridge} with treatment variable $B$ and outcome variable $O$. Compared to $\texttt{Obf}$ and $\texttt{Tbf}$, we need the following additional assumptions.
\begin{appassumption}
\label{app:ignorability_given_W_Z}
$O(s,b) \indep B(s) \, | \, U^*(s), (W \setminus W^*)(s),(Z \setminus Z^*)(s),X(s)$ for all $s \in \mathcal{X}_S, b \in \mathcal{X}_B$.    
\end{appassumption}

\begin{appassumption}[Extended outcome bridge function]
    \label{app:extended_outcome_bridge_function}
    For all $s \in \mathcal{X}_S$ there exists an extended outcome bridge function $h^{(s)}(o,w,w',b,x)$, solving the integral equation
    \begin{align}
        \label{eq:app_kernel_extended_outcome_bridge_function}
        \sum_{w'} h^{(s)}(o,w,w',b,x) \, \dodistr{w'}{z,b,x}{s} = \dodistr{o,w}{z,b,x}{s}.
    \end{align}
\end{appassumption}

\begin{appassumption}[Extended treatment bridge function]
    \label{app:extended_treatment_bridge_function}
    For all $s \in \mathcal{X}_S$ there exists an extended treatment bridge function $q^{(s)}(z,z',b,x)$, solving the integral equation
    \begin{align}
        \label{eq:app_kernel_extended_treatment_bridge_function}
        \sum_{z'} q^{(s)}(z,z',b,x) \, \dodistr{z'}{w,b,x}{s} &= \frac{\dodistr{z}{w,x}{s}}{\dodistr{b}{w,x}{s}}.
    \end{align}
    assuming additionally that $\dodistr{b}{w,x}{s} > 0$ for all $b \in \mathcal{X}_B, w \in \mathcal{X}_W, x \in \mathcal{X}_X$ and $s \in \mathcal{X}_S$.
\end{appassumption}
\begin{proposition}
    \label{prop:ebf_kernel_operation_valid}
    The kernel operation \texttt{Ebf} with input, output, and conditions as defined in \cref{sec:Kernel_operations} is valid.  In particular, we require the \cref{app:ignorability_given_W_Z,app:kernel_valid_outcome_inducing_proxy,app:valid_treatment_inducing_proxy_with_W}. Furthermore, we need to assume the existence of either an extended outcome bridge function (\cref{app:extended_outcome_bridge_function}) or an extended treatment bridge function (\cref{app:extended_treatment_bridge_function}) and the corresponding completeness assumptions \ref{app:outcome_completeness} or \ref{app:treatment_completeness}.
    \begin{proof} 
    We only provide a proof sketch, as the derivation is very similar to the proofs of previous results. We can adapt the derivations for extended outcome bridge functions in \cref{app:identify_entire_joint_with_outcome_bridge} and for extended treatment bridge functions in \cref{app:identify_entire_joint_with_treatment_bridge}, slightly adjusting the notation to the kernel setting. As before, we use the full sets $W$ and $Z$ to formulate the bridge equations and use the reduced sets $\tilde{W} := W \setminus W^*$ and $\tilde{Z} := Z \setminus Z^*$ only for the following steps of the derivation that require the exclusion restriction $X(s,b) = X(s)$, $\tilde{W}(s,b) = \tilde{W}(s)$, and $\tilde{Z}(s,b) = \tilde{Z}(s)$. 
\end{proof}
\end{proposition}

\subsection[Proof of Theorem~\ref{thm:generalized_proximal_ID_algorithm}]{Proof of \cref{thm:generalized_proximal_ID_algorithm}}
\label{sec:proof_of_generalized_ID_algorithm}
First, we can apply the district factorization result from \cref{lem:district_factorization} with the latent projection set $H$, yielding the ADMG $\mathcal{G}(V^*)$. We consider the districts of the induced subgraph $\mathcal{G}(V^*)_{Y^*}$ of the ADMG $\mathcal{G}(V^*)$. \cref{lem:district_factorization} shows that the identification of each target kernel $\dodistr{D}{}{V^* \setminus D}$ for $D \in \mathcal{D}(\mathcal{G}(V^*)_{Y^*})$ is a sufficient condition for the identifiability of $\dodistr{Y}{}{a}$. We apply the proximal identification step described in \cref{alg:proximal_ID_step_1} to prove the identification of each target kernel. The validity of the proximal identification step follows from the validity of the kernel operations \texttt{Fix}, \texttt{Obf}, \texttt{Tbf}, \texttt{Ebf} , and \texttt{Cut} that we proved in \cref{prop:fix_kernel_operation_valid,prop:obf_kernel_operation_valid,prop:tbf_kernel_operation_valid,prop:ebf_kernel_operation_valid}. The conclusion follows from inductively applying \cref{alg:proximal_ID_step_1}. Notably, we check at each step whether the relevant variables $V^*$ are still contained in $\texttt{Var}(P_1)$. If variables in $V^*$ were removed by a kernel operation, or the chosen kernel operation is not applicable to $P_1$, the algorithm fails.

\newpage
\bibliographystyle{abbrvnat} 
\bibliography{PCI_Literature}{}

\end{document}